\documentclass[11pt]{article}

\usepackage[margin=1.1in]{geometry}
\usepackage{amsmath,amssymb,amsthm}
\usepackage{booktabs}
\usepackage{array}
\usepackage{microtype}
\usepackage{listings}
\usepackage{xcolor}
\usepackage[colorlinks=true,linkcolor=blue!60!black,citecolor=blue!60!black,urlcolor=blue!60!black]{hyperref}
\usepackage[round]{natbib}
\usepackage{needspace}
\usepackage{tabularx}

\newtheorem{theorem}{Theorem}[section]
\newtheorem{proposition}[theorem]{Proposition}
\newtheorem{corollary}[theorem]{Corollary}

\theoremstyle{definition}
\newtheorem{definition}[theorem]{Definition}

\theoremstyle{remark}
\newtheorem{remark}[theorem]{Remark}

\newcommand{\FOUR}{\mathsf{FOUR}}
\newcommand{\vY}{Y}
\newcommand{\vN}{N}
\newcommand{\vYN}{Y\!N}
\newcommand{\vU}{U}
\newcommand{\lek}{\le_k}
\newcommand{\leqt}{\le_t}
\newcommand{\Sopen}{S_{\mathrm{open}}}
\newcommand{\Sred}{S_{\mathrm{reduced}}}
\newcommand{\Squasi}{S_{\mathrm{quasi}}}
\newcommand{\dcons}{\delta_{\mathrm{cons}}}
\newcommand{\dprec}{\delta_{\mathrm{prec}}}
\newcommand{\dopt}{\delta_{\mathrm{opt}}}
\newcommand{\dcont}{\delta_{\mathrm{cont}}}
\newcommand{\Ent}{\mathcal{E}}
\newcommand{\warr}{\mathbin{\Rightarrow_{\!w}}}

\lstdefinelanguage{lean}{
  keywords={theorem,def,inductive,structure,namespace,end,variable,instance,
            match,with,by,fun,cases,induction,exact,intro,obtain,where,deriving},
  sensitive=true,
  comment=[l]{--},
  morecomment=[s]{/-}{-/},
}
\title{\textbf{A Four-Valued Graph Model for Conflict Resolution:\\
Core Framework and a Machine-Checked Formalization in Lean~4}}
\author{Yukiko Kato\\[2pt]
\normalsize Institute of Science Tokyo, Japan\\
\normalsize \texttt{ykato@lynxjp.com}}
\date{}

\begin{document}
\date{}
\maketitle

\begin{abstract}
This note consolidates the core of the Quasi-Closed World Graph Model for
Conflict Resolution (QCW-GMCR), a framework that extends the standard Graph
Model for Conflict Resolution (GMCR) with Belnap's four-valued logic in order
to represent epistemic ambiguity at the option level, and pairs it with a
machine-checked formalization in the Lean~4 proof assistant. The framework
comprises four components: (1) option-level epistemic assignments in
$\FOUR=\{\vY,\vN,\vYN,\vU\}$ with compositional propagation to state-level
feasibility; (2) graded reachability (definite/credible/possible) derived from an FDE-inspired
transition-warrant semantics, with definite reachability related explicitly to
FDE consequence in the Boolean fragment; (3) formally constrained reduction operators
mapping four-valued assignments to binary decisions, with four canonical
operators characterized by distinct risk attitudes; and (4)
catastrophe-avoiding equilibrium concepts with a structural safety guarantee,
the quasi-closed world invariant. A four-valued hypergame extension embeds the
framework in the hypergame paradigm and strictly generalizes it along the
epistemic axis. We state the main definitions and theorems in
a compact form and report which results have been verified in Lean~4 with
mathlib, including the classical GMCR stability hierarchy
($\mathrm{Nash}\subseteq\mathrm{SMR}\subseteq\mathrm{GMR}$ and
$\mathrm{Nash}\subseteq\mathrm{SEQ}\subseteq\mathrm{GMR}$), the algebraic
properties of four-valued conjunction, the compositional propagation
properties, the classical-extension and truth-monotonicity properties of the
canonical reduction operators, and the
graded reachability hierarchy. The Lean development is available in the
accompanying repository. This preprint is intended as a stable, citable record
of the framework's skeleton and its formal verification status rather than as
a full journal exposition.
\end{abstract}

\noindent\textbf{Keywords:} graph model for conflict resolution; four-valued
logic; epistemic uncertainty; formal verification; Lean~4; interactive theorem
proving.

\section{Introduction}\label{sec:intro}

The Graph Model for Conflict Resolution (GMCR) analyzes strategic conflicts
among decision-makers (DMs) using ordinal preferences over discrete states and
graph-based reachability \citep{kilgour1987,fang1993}. Standard GMCR inherits
a closed-world, binary epistemology: every option is either selected or not,
every state is either feasible or not, and every transition is either possible
or not. This is analytically convenient but fails to represent situations in
which the analyst's information is genuinely incomplete or contradictory---for
example, when it is unknown whether an opponent has committed to an option, or
when intelligence about a capability is contradictory. The distinction between
``$B$ will not act'' and ``we do not know whether $B$ will act'' matters for
rational choice, yet binary models collapse it.

This gap matters most for grave, irreversible decisions that resist reliable
quantification because controlled experiments, repeated observations, or defensible
probability distributions are unavailable. This motivation is related to the broader
literature on decision making under deep or severe uncertainty, where robust action
must sometimes be chosen without a well-founded probabilistic model
\citep{benhaim2019}. QCW-GMCR takes a different route: rather than assigning a
numerical degree to such uncertainty, it represents epistemic status directly as a
grounded \emph{yes} or \emph{no} ($\vY$, $\vN$), conflicting grounds
warranting both ($\vYN$), or evidence warranting neither ($\vU$)---exactly
Belnap's four values. This distinction motivates the framework's use of four-valued
logic and its catastrophe-avoiding design.

The Quasi-Closed World GMCR (QCW-GMCR) addresses this by grounding option
status in Belnap's four-valued logic \citep{belnap1977a,belnap1977b},
building on the four-valued state definition introduced in \citet{kato2022}.
Relative to that work, which concerned the \emph{state}-level definition, the
present framework contributes the option-level compositional grounding, the
entailment-based graded reachability, the formally constrained reduction operators, and
the catastrophe-avoiding equilibria with their safety invariant; the
machine-checked formalization is new to this note.

Various GMCR extensions treat uncertainty through fuzzy preferences
\citep{bashar2012}, grey and unknown preferences \citep{yang2022},
probabilistic preferences \citep{rego2015}, or preference uncertainty in
coalitions \citep{li2014}, and the hypergame tradition models misperception
through divergent subjective games
\citep{bennett1977,wang1988,wanghf1989}, with a developed GMCR integration
ranging from first-level and general hypergames in graph form
\citep{aljefri2018,aljefri2020} to recent combinations of hypergames with
uncertain preference misperception and algebraic stability representations
\citep{zhao2024}. A further line models \emph{interactive unawareness}: a DM
may be unaware of some options available to itself or others, with iterated
levels of awareness about what other DMs are aware of \citep{rego2020}. These
lines locate uncertainty in \emph{what is valued}, in \emph{which game each
DM perceives}, or in \emph{which options a DM is aware of}. The four-valued
status $\psi_{i,o}$ concerns something prior to all three: whether the
holding of an option or state is epistemically established at all, independent
of any preference or likelihood information---so option-status ambiguity,
including the contradictory value $\vYN$, is not an object that uncertain
preferences or numerical transition probabilities are about. The value $\vU$
(neither $\vY$ nor $\vN$) records the absence of grounds for either verdict
for an option represented in the DM's subjective model. This is distinct from,
but complementary to, the structural unawareness studied by \citet{rego2020},
where an option may be absent from the DM's subjective model altogether: thus
QCW-GMCR grades the epistemic standing of options already represented, while
interactive-unawareness models determine which options enter the subjective
game in the first place. The relation to hypergames is closer and complementary; we make the
correspondence precise in Section~\ref{sec:hyper}
(Remark~\ref{rem:hyperrel}).

This note has two purposes. First, it presents the skeleton of QCW-GMCR---the
definitions, propositions, and safety guarantees that constitute the
framework---in a compact, self-contained form, stripped of extended case
studies and disciplinary positioning. Second, it documents a machine-checked
formalization of the framework's core in the Lean~4 proof assistant
\citep{demoura2021} with the mathlib library \citep{mathlib2020}. Formal
verification is a natural companion to a framework whose stated purpose is
safety under epistemic ambiguity: the guarantees offered to an analyst should
themselves be mechanically certain. Section~\ref{sec:lean} reports precisely
which results are verified and which remain informal, together with the
correspondence between paper statements and Lean identifiers. The Lean
development is available at:
\begin{center}
  \url{https://github.com/ykato0623/qcw-gmcr-lean} \quad 
\end{center}

\section{Preliminaries}\label{sec:prelim}

\subsection{Standard GMCR}

\begin{definition}[GMCR structure]\label{def:gmcr}
A \emph{graph model} is a tuple
$\mathcal{G} = (N, S, \{A_i\}_{i\in N}, \{\succ_i\}_{i\in N})$, where $N$ is a
finite set of DMs; $S$ is a finite set of states; $A_i \subseteq S\times S$ is
DM $i$'s set of unilateral moves, written $s \to_i s'$; and $\succ_i$ is
DM $i$'s \emph{strict} preference relation on $S$. As in standard GMCR, each
DM's preference is \emph{complete} (any two states are comparable, so it is
not merely a partial order) but need \emph{not} be transitive, and it admits
indifference: $s \sim_i t$ arises exactly when neither is strictly preferred,
$\lnot(s \succ_i t)$ and $\lnot(t \succ_i s)$. Only \emph{strict} preference
enters the notion of improvement. The move sets $A_i$ may include the
identity move $(s,s)$, so that a DM staying put is an admissible option; a
self-move is never a strict improvement, since $s \succ_i s$ never holds.
DM $i$'s \emph{unilateral improvement set} at $s$ is
$R^+_i(s) = \{ s' \in S : s \to_i s' \text{ and } s' \succ_i s \}$.
For the stability definitions below, we write $s \succsim_i t$ as shorthand
for $\lnot(t \succ_i s)$---``the move to $t$ is not a strict improvement over
$s$ for $i$'', equivalently $s \succ_i t$ or $s \sim_i t$. We note in passing
that the stability definitions and the machine-checked hierarchy theorems
below use only this shorthand (a sanction to $u$ is required to satisfy
$\lnot(u \succ_i s)$), so they impose no axioms on $\succ_i$ at all---in
particular neither completeness nor transitivity---and the standard
complete (possibly intransitive) preferences of GMCR are a special case.
\end{definition}

\begin{definition}[Stability concepts]\label{def:stability}
State $s$ is \emph{Nash stable} for $i$ if $R^+_i(s)=\emptyset$;
\emph{GMR stable} if every $s'\in R^+_i(s)$ admits a sanction $s''$ reachable
by the opponents $N\setminus\{i\}$ with $s \succsim_i s''$;
\emph{SMR stable} if additionally $i$ cannot escape the sanction to any
$s''' \succ_i s$; and \emph{SEQ stable} if the sanctioning moves are
themselves unilateral improvements for the sanctioners. A state stable for
all DMs under a concept is an \emph{equilibrium} of that concept.
\end{definition}

The classical inclusions
$\mathrm{Nash}\subseteq\mathrm{SMR}\subseteq\mathrm{GMR}$ and
$\mathrm{Nash}\subseteq\mathrm{SEQ}\subseteq\mathrm{GMR}$ \citep{fang1993}
are among the results verified in Lean (Section~\ref{sec:lean}); notably, the
verification uses only the shorthand of Definition~\ref{def:gmcr} and imposes
\emph{no} axioms on $\succ_i$, so the theorems hold without assuming
completeness or transitivity, and in particular under the standard complete,
possibly intransitive preferences of GMCR.

\subsection{Belnap's four-valued logic}

Belnap--Dunn four-valued logic, also known as first-degree entailment (FDE),
has a substantial literature beyond the original presentations; for a modern
overview and historical clarification see \citet{omori2017}.

\begin{definition}[Truth space and bilattice]\label{def:four}
Belnap's truth space is $\FOUR = \{\vY,\vN,\vYN,\vU\}$: true ($\vY$),
false ($\vN$), both/contradictory ($\vYN$), and neither/unknown ($\vU$).
$\FOUR$ carries two partial orders: the \emph{truth order} $\leqt$ with
$\vN <_t \vU <_t \vY$ and $\vN <_t \vYN <_t \vY$ ($\vU,\vYN$ incomparable),
and the \emph{knowledge order} $\lek$ with $\vU <_k \vN,\vY <_k \vYN$.
\end{definition}

\begin{definition}[Connectives]\label{def:connectives}
Negation swaps evidence: $\lnot \vY = \vN$, $\lnot\vN=\vY$,
$\lnot\vYN=\vYN$, $\lnot\vU=\vU$. Conjunction $\wedge$ is the meet and
disjunction $\vee$ the join in the truth order. Table~\ref{tab:conj} gives
the conjunction table used throughout.
\end{definition}

\begin{table}[ht]
\centering
\caption{Four-valued conjunction.}
\label{tab:conj}
\begin{tabular}{c|cccc}
\toprule
$\wedge$ & $\vY$ & $\vN$ & $\vYN$ & $\vU$\\
\midrule
$\vY$  & $\vY$  & $\vN$ & $\vYN$ & $\vU$\\
$\vN$  & $\vN$  & $\vN$ & $\vN$  & $\vN$\\
$\vYN$ & $\vYN$ & $\vN$ & $\vYN$ & $\vN$\\
$\vU$  & $\vU$  & $\vN$ & $\vN$  & $\vU$\\
\bottomrule
\end{tabular}
\end{table}

\begin{remark}[Propagation-relevant algebra]\label{rem:algebra}
$\vN$ is absorbing and $\vY$ is the identity for $\wedge$; conjunction is
commutative, associative, and idempotent; and $\vYN \wedge \vU = \vN$
(incompatible ambiguity types resolve pessimistically). These algebraic facts,
which drive Proposition~\ref{prop:propagation}, are verified in Lean.
\end{remark}

\section{Option-Level Epistemic Foundations}\label{sec:option}

\begin{definition}[Option-level assignment; four-valued state]\label{def:psi}
Let $O_i$ be DM $i$'s option set and $O=\bigcup_{i\in N} O_i$. An
\emph{option-level epistemic assignment} is $\psi_{i,o} \in \FOUR$ for each
$i\in N$, $o\in O_i$: definitely selected/executable ($\vY$), definitely not
($\vN$), contradictory evidence ($\vYN$), or unknown ($\vU$). A
\emph{four-valued state} is a complete assignment
$s \in \FOUR^{|O|}$; the open-world state space is
$\Sopen^{\psi} = \FOUR^{|O|}$, of cardinality $4^{|O|}$ versus $2^{|O|}$ in
standard GMCR.
\end{definition}

\begin{definition}[Compositional feasibility]\label{def:comp}
A \emph{feasibility composition operator} is a map
$\Gamma : \FOUR^{|O|} \to \FOUR$ from option-level assignments to state-level
feasibility $\phi_s$. The default is \emph{conjunctive composition}
$\Gamma_\wedge(s) = \bigwedge_{i,o} \psi_{i,o}(s)$: a state is feasible only
to the degree that all of its constituent options are.
\end{definition}

\begin{proposition}[Compositional propagation]\label{prop:propagation}
Under $\Gamma_\wedge$:
(i) if $\psi_{i,o}=\vN$ for any option, then $\phi_s=\vN$;
(ii) if all $\psi_{i,o}\in\{\vY,\vYN\}$ with at least one $\vYN$, then
$\phi_s=\vYN$;
(iii) if all $\psi_{i,o}\in\{\vY,\vU\}$ with at least one $\vU$, then
$\phi_s=\vU$;
(iv) if all $\psi_{i,o}=\vY$, then $\phi_s=\vY$; and
(v) if some options are $\vYN$ and others $\vU$, then $\phi_s=\vN$.
\end{proposition}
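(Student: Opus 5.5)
The plan is to reduce everything to the algebraic facts of Remark~\ref{rem:algebra}, together with one structural observation. Since $\wedge$ is commutative and associative, $\Gamma_\wedge(s)=\bigwedge_{i,o}\psi_{i,o}(s)$ is well defined independently of the order and bracketing of the finite conjunction. We may therefore regroup the conjuncts freely by value. Let $n_v$ denote the number of options carrying value $v\in\FOUR$. By idempotence, every nonempty group of identical conjuncts collapses to its common value. Hence
\[
\phi_s \;=\; \bigwedge_{v\,:\,n_v>0} v ,
\]
a conjunction over the \emph{set} of values that actually occur. We assume $|O|\ge 1$; an empty conjunction would be $\vY$, the identity, which agrees with (iv) vacuously. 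The whole proposition then becomes a statement about conjunctions over subsets of $\FOUR$, each of which can be read off Table~\ref{tab:conj}.

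With this reduction the cases are immediate.

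\begin{itemize}
\item For (i), $\vN$ occurs, and $\vN$ is absorbing, so $\phi_s=\vN$ regardless of the other values.
\item For (iv), the occurring set is $\{\vY\}$, so $\phi_s=\vY$.
\item For (ii), the occurring set is $\{\vYN\}$ or $\{\vY,\vYN\}$. Since $\vY$ is the identity, $\vY\wedge\vYN=\vYN$, and in both cases $\phi_s=\vYN$.
\item For (iii), the same argument with $\vU$ in place of $\vYN$ gives $\phi_s=\vU$.
\item For (v), the occurring set contains both $\vYN$ and $\vU$. Grouping these two first, $\vYN\wedge\vU=\vN$ by Remark~\ref{rem:algebra}, and absorption of $\vN$ then kills every remaining conjunct, so $\phi_s=\vN$.
\end{itemize}

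An alternative, which is closer to how one would formalize this in Lean, is induction on the list of option values with a fold of $\wedge$ starting from $\vY$. In that approach each clause is proved via an invariant on the accumulator:
\begin{itemize}
\item for (i), once the accumulator is $\vN$ it stays $\vN$;
\item for (ii), if the accumulator lies in $\{\vY,\vYN\}$ and we conjoin a value in $\{\vY,\vYN\}$, the result stays in $\{\vY,\vYN\}$, and it equals $\vYN$ as soon as one $\vYN$ has been seen;
\item for (iii), the analogous invariant holds for $\{\vY,\vU\}$;
\item for (v), we track the pair of flags ``a $\vYN$ has been seen'' and ``a $\vU$ has been seen'', and show the accumulator becomes $\vN$ once both are set.
\end{itemize}

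The only real obstacle is bookkeeping. We must justify the regrouping, or equivalently state the inductive invariants correctly, for a conjunction indexed by the finite set of pairs $(i,o)$ rather than by a fixed list. In particular, clause (v) requires that no hypothesis be placed on the remaining options, since they may be $\vY$, $\vN$, or further copies of $\vYN$ or $\vU$. This is handled because absorption makes their values irrelevant. I would therefore present (v) via commutativity and associativity followed by absorption, rather than by a case split on the other values.
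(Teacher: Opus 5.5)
Your proposal is correct and is essentially the paper's argument: the paper declares the result immediate from Table~\ref{tab:conj}, with case (v) resting on $\vYN\wedge\vU=\vN$, and your regrouping of the conjuncts by occurring value via commutativity, associativity, and idempotence (Remark~\ref{rem:algebra}) makes that ``immediate'' explicit. Your fold-induction alternative also matches how the Lean development proves cases (i) and (iv), where $\Gamma_\wedge$ is realized as a fold over a \texttt{Fintype} of options.
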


\begin{proof}
Immediate from Table~\ref{tab:conj}; case (v) reflects
$\vYN\wedge\vU=\vN$. Cases (i) and (iv) are verified in Lean over arbitrary
finite option sets (Section~\ref{sec:lean}).
\end{proof}

State-level ambiguity thus \emph{inherits} transparently from option-level
uncertainty, so the analyst can trace which options drive ambiguity. Option
dependence constraints (mutual exclusivity, implication) extend to $\FOUR$ by
evaluating the constraint formula four-valuedly and conjoining it with
$\Gamma(s)$; we omit the routine details here.

\section{Entailment-Based Reachability}\label{sec:reach}

Standard reachability is all-or-nothing. When options carry four-valued
status, transitions differ in \emph{epistemic warrant}. The conceptual core
of the integration proposed here is that this warrant is not an annotation
added to reachability from outside, but is tied to the information semantics of
Belnap--Dunn FDE \citep{anderson1962,omori2017}. Standard FDE is an
implication-free first-degree logic; accordingly, the operation introduced below
is not presented as a primitive FDE conditional. It is an FDE-inspired
transition-warrant operation whose connection to FDE consequence is stated
explicitly in Proposition~\ref{thm:fdechar}.

\begin{definition}[Transition-warrant connective; move warrant]\label{def:entail}
The \emph{transition-warrant connective} $\warr$ on $\FOUR$ is given by
Table~\ref{tab:entail}. It is a transition-specific four-valued operation inspired
by FDE semantics, rather than a primitive conditional of FDE itself. For a
unilateral move $s\to_i s'$, the \emph{move warrant} is
$\Ent(s\to_i s') = \bigwedge_{o\in O_i^{\Delta}}
\bigl(\psi_{i,o}(s) \warr \psi_{i,o}(s')\bigr)$,
where $O_i^{\Delta}$ is the set of options whose value changes.
\end{definition}

\begin{table}[ht]
\centering
\caption{FDE-inspired transition-warrant connective on $\FOUR$ (row: pre-value; column: post-value).}
\label{tab:entail}
\begin{tabular}{c|cccc}
\toprule
$\warr$ & $\vY$ & $\vN$ & $\vYN$ & $\vU$\\
\midrule
$\vY$  & $\vY$  & $\vN$ & $\vYN$ & $\vU$\\
$\vN$  & $\vY$  & $\vY$ & $\vY$  & $\vY$\\
$\vYN$ & $\vYN$ & $\vN$ & $\vYN$ & $\vU$\\
$\vU$  & $\vY$  & $\vU$ & $\vYN$ & $\vU$\\
\bottomrule
\end{tabular}
\end{table}

We call $\warr$ the \emph{transition-warrant connective}: it is the
FDE-inspired $\FOUR$-valued operation we adopt to grade transitions, and it
is this operation---not any native conditional of FDE, which
$\{\lnot,\wedge,\vee\}$-based presentations of FDE need not
provide---whose per-option behaviour Proposition~\ref{thm:fdechar} certifies.
Its entries read naturally as transition warrants:
$\vY\warr\vN=\vN$ because moving from a definite commitment to its definite
negation is unwarranted (this is the retraction that
Proposition~\ref{thm:fdechar} isolates); $\vN\warr\phi=\vY$ because from a
component already false any change is unopposed; $\vY\warr\vYN=\vYN$ and
$\vY\warr\vU=\vU$ because a move from certainty into contradiction or
ignorance carries exactly that ambiguous warrant; and $\vU\warr\vU=\vU$
because ignorance about the source propagates to ignorance about the
transition.

\begin{definition}[Graded reachability]\label{def:levels}
Each level $\ell\in\{\vY,\vYN,\vU\}$ designates a set of admissible
warrants: $D_{\vY}=\{\vY\}$, $D_{\vYN}=\{\vY,\vYN\}$, and
$D_{\vU}=\{\vY,\vYN,\vU\}$. State $s'$ is reachable from $s$ by DM $i$
\emph{at level} $\ell$, written $s \to^{\ell}_i s'$, if the move is
unilateral for $i$ and $\Ent(s\to_i s') \in D_\ell$. The three levels are
read \emph{definite} (certainly executable), \emph{credible} (supported
despite possible contradiction), and \emph{possible} (not ruled out).
\end{definition}

\begin{remark}[Warrant sets, not the truth order]\label{rem:levels}
An earlier phrasing defined $s\to^{\ell}_i s'$ by
$\Ent(s\to_i s')\ge_t\ell$. That is not equivalent, and it breaks
Proposition~\ref{prop:hierarchy}: $\vU$ and $\vYN$ are
$\leqt$-incomparable, so $\Ent=\vYN$ satisfies $\Ent\ge_t\vYN$ but not
$\Ent\ge_t\vU$, and credible reachability would not be contained in
possible reachability. The Lean development has used the warrant-set
semantics from the start (\texttt{isDefiniteReach}, \texttt{isCredibleReach},
\texttt{isPossibleReach}), under which the nesting
$D_{\vY}\subset D_{\vYN}\subset D_{\vU}$ makes the hierarchy immediate;
the paper follows the machine-checked formulation.
\end{remark}

\begin{proposition}[Reachability hierarchy]\label{prop:hierarchy}
$\{s\to^{\vY}_i s'\} \subseteq \{s\to^{\vYN}_i s'\}
 \subseteq \{s\to^{\vU}_i s'\}$.
\end{proposition}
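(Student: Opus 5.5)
The plan is to unfold Definition~\ref{def:levels} and reduce both inclusions to set inclusions among the designated warrant sets. By definition, $s\to^{\ell}_i s'$ is the conjunction of two conditions. The first, that the move $s\to_i s'$ is unilateral for $i$, does not depend on $\ell$. The second is $\Ent(s\to_i s')\in D_\ell$. The move warrant $\Ent(s\to_i s')$ is a single element of $\FOUR$, computed from Table~\ref{tab:entail} and the conjunction of Table~\ref{tab:conj}. It is fixed once $s$, $s'$, $i$ and the assignment $\psi$ are fixed, and it does not depend on the level either. So the proposition needs nothing about the internal structure of $\warr$ or of $\wedge$. In particular it does not require Proposition~\ref{prop:propagation} or any computation with the tables.

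The key steps, in order, are as follows. First, I verify directly from the definitions that $D_{\vY}=\{\vY\}\subseteq\{\vY,\vYN\}=D_{\vYN}$ and $D_{\vYN}=\{\vY,\vYN\}\subseteq\{\vY,\vYN,\vU\}=D_{\vU}$. Second, I fix a pair $(s,s')$ with $s\to^{\vY}_i s'$. Unilaterality carries over unchanged, and the first inclusion gives $\Ent(s\to_i s')\in D_{\vY}\subseteq D_{\vYN}$, hence $s\to^{\vYN}_i s'$. Third, I repeat the argument with the second inclusion to get $s\to^{\vYN}_i s'\Rightarrow s\to^{\vU}_i s'$. In Lean this amounts to introducing the hypothesis, destructuring it into its unilaterality and membership components, and discharging membership by case analysis on the finite value of $\Ent$, or by \texttt{simp} on the explicit finite sets. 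This mirrors \texttt{isDefiniteReach}, \texttt{isCredibleReach}, \texttt{isPossibleReach}.

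There is no real obstacle here. The only subtle point, which I will stress, is the one isolated in Remark~\ref{rem:levels}. The argument works because levels are defined by membership in nested warrant sets rather than by $\ge_t$. Under the truth-order reading the second inclusion would fail at $\Ent=\vYN$, since $\vYN$ and $\vU$ are $\leqt$-incomparable. So the care needed is to state the proof against the warrant-set semantics. The nesting $D_{\vY}\subset D_{\vYN}\subset D_{\vU}$ then makes each inclusion a one-line monotonicity step.
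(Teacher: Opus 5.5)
Your proof is correct and matches the paper's argument. Remark~\ref{rem:levels} says that under the warrant-set semantics the nesting $D_{\vY}\subset D_{\vYN}\subset D_{\vU}$ makes the hierarchy immediate, which is exactly your step ($\Ent$ does not depend on the level, plus set inclusion), and your caveat about the $\ge_t$ reading is the one the paper raises. One small note: the Lean lemmas \texttt{definite\_imp\_credible} and \texttt{credible\_imp\_possible} are stated at the transition-value level, so there is no unilaterality component to destructure there---only the case split on the warrant value.
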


Operationally, the levels correspond to standards of strategic reasoning: a
conservative DM acts only on definite capabilities, a moderate DM on contested
but supported evidence, and an optimistic DM on anything not ruled out;
heterogeneous standards across DMs are captured by the hypergame extension of
Section~\ref{sec:hyper}.

\begin{definition}[State formula; FDE consequence]\label{def:fde}
For a state $s$ with Boolean option values, the \emph{state formula} is
$\Phi_s = \bigwedge_{i,o} \lambda_{i,o}(s)$, where
$\lambda_{i,o}(s)=p_{i,o}$ if option $o$ is selected in $s$ and
$\lambda_{i,o}(s)=\lnot p_{i,o}$ otherwise; the \emph{positive fragment}
$\Phi^{+}_s$ conjoins only the positive literals of the selected options.
An entailment is \emph{FDE-valid}, written
$\phi \vDash_{\mathrm{FDE}} \psi$, if $v(\phi) \leqt v(\psi)$ for every
four-valued valuation $v$; this order-theoretic formulation is the classical
characterization of first-degree entailment \citep{anderson1962,dunn1976}
and is equivalent to the usual presentation via preservation of the
designated values $\{\vY,\vYN\}$.
\end{definition}

\begin{remark}[A refuted characterization]\label{rem:fderefuted}
The natural conjecture---for Boolean states,
$s \to^{\vY}_i s' \iff \Phi_s \vDash_{\mathrm{FDE}} \Phi_{s'}$---is
\emph{false}, and instructively so. Table~\ref{tab:entail} makes selecting
a fresh option definite ($\vN\warr\vY=\vY$), yet
$\lnot p \nvDash_{\mathrm{FDE}} p$. Worse, state formulas are complete
conjunctions of literals, and FDE consequence between complete descriptions
holds only under literal containment, so
$\Phi_s \vDash_{\mathrm{FDE}} \Phi_{s'}$ forces $s=s'$: the conjectured
relation collapses to identity. Move warrant is not consequence between
full state descriptions.
\end{remark}

\begin{proposition}[FDE characterization of definite reachability,
corrected]\label{thm:fdechar}
For unilateral moves between states with Boolean option values, the
following are equivalent: (i) $s \to^{\vY}_i s'$; (ii) no option is
de-selected, i.e., $\mathrm{sel}(s) \subseteq \mathrm{sel}(s')$; (iii)
$\Phi^{+}_{s'} \vDash_{\mathrm{FDE}} \Phi^{+}_{s}$.
\end{proposition}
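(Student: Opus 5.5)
I will prove the cycle (i)$\Leftrightarrow$(ii)$\Leftrightarrow$(iii). Both equivalences run through the Boolean restriction of Table~\ref{tab:entail} and the algebra of Remark~\ref{rem:algebra}. For Boolean states every option value is $\vY$ or $\vN$. The relevant $2\times 2$ block of $\warr$ is
$\vY\warr\vY=\vY$, $\vY\warr\vN=\vN$, $\vN\warr\vY=\vY$, and $\vN\warr\vN=\vY$.

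\emph{Step 1: (i)$\Leftrightarrow$(ii).} The move warrant $\Ent(s\to_i s')$ conjoins $\psi_{i,o}(s)\warr\psi_{i,o}(s')$ over the changed options $O_i^{\Delta}$. Each such conjunct lies in $\{\vY,\vN\}$, and it equals $\vN$ exactly when the option goes from $\vY$ to $\vN$, i.e.\ when it is de-selected.

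By Proposition~\ref{prop:propagation}(i) and (iv), the conjunction is $\vY$ iff no conjunct is $\vN$. An empty conjunction is $\vY$, since $\vY$ is the identity for $\wedge$, which covers the case $s=s'$. Because the move is unilateral, it only changes options of DM $i$. So $\Ent=\vY$, i.e.\ $\Ent\in D_{\vY}$, holds iff no option is de-selected, which is $\mathrm{sel}(s)\subseteq\mathrm{sel}(s')$.

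\emph{Step 2: (ii)$\Rightarrow$(iii).} If $\mathrm{sel}(s)\subseteq\mathrm{sel}(s')$, then the literal set of $\Phi^{+}_{s}$ is contained in that of $\Phi^{+}_{s'}$. Conjunction is the $\leqt$-meet, so dropping conjuncts can only raise the value. Hence for every four-valued valuation $v$,
\[
  v(\Phi^{+}_{s'})=\bigwedge_{o\in\mathrm{sel}(s')}v(p_{o})\;\leqt\;\bigwedge_{o\in\mathrm{sel}(s)}v(p_{o})=v(\Phi^{+}_{s}).
\]
If $\mathrm{sel}(s)=\emptyset$, read $\Phi^{+}_s$ as the empty conjunction with value $\vY$, the top element; the inequality is then trivial.

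\emph{Step 3: (iii)$\Rightarrow$(ii), the main obstacle.} Here I must extract literal containment from FDE validity, which is the positive-literal analogue of the claim in Remark~\ref{rem:fderefuted}. I argue by contraposition with a separating valuation. Suppose some $o\in\mathrm{sel}(s)\setminus\mathrm{sel}(s')$. Define $v(p_o)=\vN$ and $v(p_{o'})=\vY$ for every other atom. Then $v(\Phi^{+}_{s'})=\vY$, since $p_o$ does not occur in $\Phi^{+}_{s'}$ and $\vY$ is the identity for $\wedge$. On the other hand $v(\Phi^{+}_{s})=\vN$, since $\vN$ is absorbing. Because $\vY\not\leqt\vN$, the entailment fails.

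The delicate points are these. The formulas contain only positive literals, so this simple two-valued valuation suffices and no use of $\vYN$ or $\vU$ is needed. The empty-conjunction convention must be fixed consistently in Steps 2 and 3. Finally, the unilateral hypothesis is what makes $O_i^{\Delta}$ cover all changed options, so that (ii) can be stated over all options rather than only DM $i$'s.
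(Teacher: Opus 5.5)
Your proof is correct and follows the paper's own argument. You get (i)$\Leftrightarrow$(ii) from the Boolean block of Table~\ref{tab:entail}, where the retraction $\vY\warr\vN=\vN$ is the only non-$\vY$ warrant a changed option can carry. You get (ii)$\Leftrightarrow$(iii) from meet-monotonicity of positive conjunctions in $\leqt$, together with the separating valuation sending the missing atom to $\vN$ and all others to $\vY$. Your empty-conjunction convention and your remark that unilaterality makes $O_i^{\Delta}$ exhaust all changed options are details the paper leaves implicit, and you handle them correctly.
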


\begin{proof}
(i)$\iff$(ii): for Boolean pre/post values, Table~\ref{tab:entail} gives
$\vY\warr\vY=\vN\warr\vN=\vN\warr\vY=\vY$ and $\vY\warr\vN=\vN$, so the
conjunction over changed options equals $\vY$ exactly when no option
transitions $\vY\warr\vN$. (ii)$\iff$(iii): for conjunctions of positive
literals, $\bigwedge B \vDash_{\mathrm{FDE}} \bigwedge A$ holds iff
$A\subseteq B$ as literal sets---if $A\subseteq B$, the glb over $B$ is
$\leqt$ the glb over $A$ at every valuation; if $q\in A\setminus B$, the
valuation sending $q$'s variable to $\vN$ and every variable of $B$ to
$\vY$ refutes the consequence.
\end{proof}

The corrected characterization is what licenses ``grounded in FDE'': at the
definite level a move is warranted exactly when the target \emph{contains}
the source's positive commitments---definite moves may add commitments but
never retract them, and it is retraction ($\vY\warr\vN$) that carries
warrant $\vN$. The graded levels then admit warrants from the larger sets
$D_{\vYN}$ and $D_{\vU}$ in a controlled way. Formalizing
Proposition~\ref{thm:fdechar} is the immediate next Lean target
(Section~\ref{sec:lean}); its per-option core is already fixed by the
machine-checked \texttt{Val4.entail}.

\begin{remark}[Relevance as strategic grounding]\label{rem:relevance}
FDE satisfies the variable-sharing property: $\phi\vDash_{\mathrm{FDE}}\psi$
only if $\phi$ and $\psi$ share a propositional variable
\citep{anderson1962,omori2017}. Under
Proposition~\ref{thm:fdechar} this signature feature of relevance logic
acquires a strategic meaning: full warrant requires the target configuration
to carry every positive commitment of the source---moves are contentually
anchored extensions, never wholesale replacements. Transition validity
thus respects \emph{content}, not merely truth-functional relationships; this
is what distinguishes the entailment grounding from stipulated feasibility
degrees or probabilistic weightings, which are insensitive to content.
\end{remark}

\begin{theorem}[Path entailment decay]\label{thm:decay}
For a path $\pi = s_0 \to_{i_1} s_1 \to_{i_2} \cdots \to_{i_k} s_k$ with
$\Ent(\pi) = \bigwedge_{j} \Ent(s_{j-1}\to_{i_j} s_j)$, entailment is
non-increasing under concatenation: $\Ent(\pi\cdot\pi') \le_t \Ent(\pi)$.
\end{theorem}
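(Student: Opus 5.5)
The plan is to reduce the statement to the lower-bound property of the meet in the truth lattice. By Definition~\ref{def:connectives}, conjunction $\wedge$ on $\FOUR$ is the meet with respect to $\leqt$. $(\FOUR,\leqt)$ is the diamond lattice with bottom $\vN$, top $\vY$ and incomparable middle elements $\vU,\vYN$, so binary meets exist. By Remark~\ref{rem:algebra}, $\wedge$ is also commutative and associative. Hence the finite conjunction $\Ent(\pi)=\bigwedge_j \Ent(s_{j-1}\to_{i_j} s_j)$ is well defined regardless of bracketing or order. I read $\pi\cdot\pi'$ as the path whose move list is that of $\pi$ followed by that of $\pi'$, where $\pi'$ starts at $s_k$.

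The key step is to show $\Ent(\pi\cdot\pi')=\Ent(\pi)\wedge\Ent(\pi')$. Write the moves of $\pi$ as $m_1,\dots,m_k$ and those of $\pi'$ as $m_{k+1},\dots,m_{k+l}$, with warrants $e_j=\Ent(m_j)$. Then $\Ent(\pi\cdot\pi')=\bigwedge_{j=1}^{k+l}e_j$. By associativity, which I would prove as a short induction on $l$ if the iterated meet is defined as a left fold, this equals $\bigl(\bigwedge_{j\le k}e_j\bigr)\wedge\bigl(\bigwedge_{j>k}e_j\bigr)$. Two edge cases need a convention. If $\pi'$ is empty, the empty conjunction is $\vY$, the identity for $\wedge$ by Remark~\ref{rem:algebra}, and the claim is equality. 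If $\pi$ is empty, $\Ent(\pi)=\vY$ is the $\leqt$-top element, so the claim holds trivially.

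It remains to show $a\wedge b\leqt a$ for all $a,b\in\FOUR$. Since $\wedge$ is the $\leqt$-meet, this is immediate. Because the paper fixes $\wedge$ by Table~\ref{tab:conj}, I would also confirm it directly by a 16-case check against the order of Definition~\ref{def:four}, which is mechanical (in Lean, \texttt{decide}). The cases that need care are the incomparable pair: $\vYN\wedge\vU=\vN\leqt\vYN$ and $\vU\wedge\vYN=\vN\leqt\vU$.

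The main obstacle is not the lattice fact but the bookkeeping. One issue is fixing the meaning of an empty path and of concatenation. The other is confirming that Table~\ref{tab:conj} really is the $\leqt$-meet. Both $\vU$ and $\vYN$ lie above $\vN$, so their meet is $\vN$, consistent with the table. Once these are settled, the theorem follows, and by transitivity of $\leqt$ we get monotone decay along any chain of concatenations: $\Ent$ never increases as a path is extended.
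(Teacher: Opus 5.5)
Your proposal is correct and matches the paper's proof: the paper also notes that concatenation only adds conjuncts and then uses $\phi\wedge\psi\leqt\phi$. Your additional bookkeeping makes explicit what the paper leaves implicit, namely regrouping via associativity, the empty-path conventions, and the 16-case check that Table~\ref{tab:conj} is the $\leqt$-meet.
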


\begin{proof}
$\phi\wedge\psi \le_t \phi$ for all $\phi,\psi\in\FOUR$; concatenation only
adds conjuncts.
\end{proof}

Elaborate multi-step plans through epistemically uncertain territory can
carry weaker warrant than direct moves: concatenation can never increase
warrant, and a path is no more secure than its weakest constituent warrant.

\section{Quasi-Closed World Architecture}\label{sec:qcw}

The structural-safety idea is related to viability and controlled-invariance
approaches, which identify subsets of a state space from which constraints can be
maintained under admissible dynamics \citep{aubin2009}. QCW-GMCR differs in
that the transition structure is strategic and DM-dependent, while the admissible
safe region is constructed jointly from four-valued epistemic reduction and
catastrophe exclusion rather than from a control law on a continuous dynamical
system.

QCW-GMCR transforms an ambiguity-rich open world into a safety-preserving
quasi-closed world through four layers:
\[
\underbrace{\psi_{i,o}\in\FOUR}_{\text{Layer 0: options}}
\;\xrightarrow{\;\Gamma\;}\;
\underbrace{\Sopen}_{\text{Layer 1: open world}}
\;\xrightarrow{\;\delta\;}\;
\underbrace{\Sred}_{\text{Layer 2: reduced}}
\;\xrightarrow{\;\kappa\;}\;
\underbrace{\Squasi}_{\text{Layer 3: quasi-closed}}
\]

\subsection{Reduction operators}

\begin{definition}[Deterministic reduction operator]\label{def:delta}
A \emph{reduction operator} is a map $\delta:\FOUR\to\{\vY,\vN\}$ with
$\delta(\vY)=\vY$ and $\delta(\vN)=\vN$; it resolves the ambiguous values
$\vYN,\vU$ to binary decisions, formalizing the cognitive closure by which a
DM enables action under ambiguity. The reduced space is
$\Sred = \{ s \in \Sopen : \delta(\phi_s) = \vY \}$.
\end{definition}

\begin{definition}[Canonical operators]\label{def:canonical}
Table~\ref{tab:delta} lists the four canonical operators:
conservative $\dcons$, precautionary $\dprec$, optimistic $\dopt$, and
contradiction-averse $\dcont$.
\end{definition}

\begin{table}[ht]
\centering
\caption{Canonical reduction operators.}
\label{tab:delta}
\begin{tabular}{lcccl}
\toprule
Operator & $\delta(\vYN)$ & $\delta(\vU)$ & Risk attitude & Context\\
\midrule
$\dcons$ & $\vN$ & $\vN$ & risk-averse   & safety-critical\\
$\dprec$ & $\vY$ & $\vN$ & moderate      & high-stakes conflicts\\
$\dopt$  & $\vY$ & $\vY$ & risk-tolerant & exploratory\\
$\dcont$ & $\vN$ & $\vY$ & moderate      & data reconciliation\\
\bottomrule
\end{tabular}
\end{table}

\begin{definition}[Formal requirements and design principles]\label{def:admissible}
A reduction operator $\delta$ is \emph{admissible} if it meets two formal
requirements: (A0) \emph{classical extension}, $\delta(\vY)=\vY$ and
$\delta(\vN)=\vN$; and (A1) \emph{truth monotonicity}, $v_1 \leqt v_2$
and $\delta(v_1)=\vY$ imply $\delta(v_2)=\vY$. (Infeasibility preservation
$\delta(\vN)=\vN$ is part of A0, not a separate condition.) Two further
\emph{design principles} guide the choice of the two free values but are not
mathematical predicates: (P1) \emph{contradiction caution}---$\delta(\vYN)$
should be set with awareness of catastrophe proximity; and (P2)
\emph{ignorance transparency}---$\delta(\vU)$ should be stated explicitly
and subjected to sensitivity analysis. Quantification over ``admissible
operators'' elsewhere refers to A0 and A1 only.
\end{definition}

\begin{remark}[Why truth rather than knowledge monotonicity]\label{rem:a4}
An earlier formulation of the monotonicity requirement (A1) demanded
monotonicity along the knowledge order $\lek$ rather than the truth order.
The Lean formalization refuted it: $\vY \lek \vYN$ yet
$\dcons(\vY)=\vY$ and $\dcons(\vYN)=\vN$; likewise $\dopt$ fails at
$\vU \lek \vN$ and $\dcont$ at $\vY \lek \vYN$. Exhaustive case
analysis shows that $\dprec$ is the \emph{only} canonical operator that is
knowledge-monotone. Conceptually this is as it should be: moving up the
knowledge order can mean \emph{acquiring contradictory evidence}, and
contradiction caution (P1) exists precisely so that a cautious operator may
exclude a state upon such acquisition---knowledge monotonicity would
legislate that caution away. Truth monotonicity, under which inclusion is
preserved as a value becomes more definitely true, is the sound requirement,
and all four canonical operators satisfy it; both the correction and the
counterexamples are machine-checked (Section~\ref{sec:lean}).
\end{remark}

\begin{theorem}[Exhaustiveness of the canonical operators]\label{thm:uniqueness}
A0 fixes $\delta(\vY)=\vY$ and $\delta(\vN)=\vN$ and leaves only
$\delta(\vYN),\delta(\vU)\in\{\vY,\vN\}$ free, so there are exactly
$2^2=4$ classical-extension deterministic reductions---and all four satisfy
truth monotonicity (A1), hence are admissible. They are precisely
$\dcons,\dprec,\dopt,\dcont$, realizing respectively maximal caution,
contradiction tolerance with ignorance aversion, contradiction aversion with
ignorance tolerance, and maximal inclusion. The canonical operators are thus
not representative examples but an exhaustive enumeration; the normative
labels are an interpretation laid over a complete four-element space.
\end{theorem}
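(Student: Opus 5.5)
The plan is a direct enumeration in three steps: count the maps allowed by A0, check A1 for each, and match them against Table~\ref{tab:delta}.

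\emph{Counting.} By Definition~\ref{def:delta}, a reduction operator is a function $\delta:\FOUR\to\{\vY,\vN\}$, so it is determined by its four values. Requirement A0 of Definition~\ref{def:admissible} fixes $\delta(\vY)=\vY$ and $\delta(\vN)=\vN$. Hence the map $\delta\mapsto(\delta(\vYN),\delta(\vU))$ is a bijection from classical-extension reductions onto $\{\vY,\vN\}^2$, which has exactly $2^2=4$ elements.

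\emph{Truth monotonicity for all four.} Rather than checking the four operators one by one, I will show that A1 follows from A0 alone. I will list the strict relations $v_1<_t v_2$ from Definition~\ref{def:four}:
\[
\vN<_t\vU,\quad \vN<_t\vYN,\quad \vN<_t\vY,\quad \vU<_t\vY,\quad \vYN<_t\vY .
\]
Pairs with $v_1=v_2$ are trivial.
\begin{itemize}
\item When $v_1=\vN$, the premise $\delta(v_1)=\vY$ of A1 fails because $\delta(\vN)=\vN$, so A1 holds vacuously.
\item In the remaining pairs $v_2=\vY$, and $\delta(\vY)=\vY$ by A0, so the conclusion holds outright.
\end{itemize}
The pair $\vU,\vYN$ is $\leqt$-incomparable, so A1 places no constraint linking $\delta(\vYN)$ and $\delta(\vU)$. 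This is exactly why both free values can be chosen independently. Every classical-extension reduction is therefore admissible.

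\emph{Identification.} Reading off Table~\ref{tab:delta}, the pairs $(\delta(\vYN),\delta(\vU))$ are
\[
\dcons=(\vN,\vN),\quad \dprec=(\vY,\vN),\quad \dopt=(\vY,\vY),\quad \dcont=(\vN,\vY).
\]
These are pairwise distinct, so by the counting step they exhaust the space. The interpretive labels are then attached to these pairs: $(\vN,\vN)$ is maximal caution, $(\vY,\vY)$ is maximal inclusion, and the two mixed pairs give the contradiction-tolerant/ignorance-averse and contradiction-averse/ignorance-tolerant operators. The theorem lists the labels in its own order, which is not the table's row order, so I will state the correspondence pair by pair rather than by list position.

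There is no real obstacle; the argument is a finite case analysis that Lean can close with \texttt{decide}. The one step needing care is the A1 verification: one must use the correct Hasse diagram of $\leqt$, in particular that $\vU$ and $\vYN$ are incomparable. Confusing $\leqt$ with $\lek$ would wrongly rule operators out, as Remark~\ref{rem:a4} shows.
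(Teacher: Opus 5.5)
Your proposal is correct. Its counting and identification steps coincide with the paper's argument, which is given inline in the statement itself: A0 pins $\delta(\vY)$ and $\delta(\vN)$ and leaves the pair $(\delta(\vYN),\delta(\vU))\in\{\vY,\vN\}^2$ free.

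The genuine difference is the A1 step. The paper asserts that all four operators are truth-monotone and certifies this one operator at a time in Lean (\texttt{cons\_truth\_monotone}, \texttt{prec\_truth\_monotone}, \texttt{opt\_truth\_monotone}, \texttt{cont\_truth\_monotone}), each by exhaustive case analysis after the operators have been named. You instead prove once, for an arbitrary map, that A0 implies A1. This works because in the diamond $(\FOUR,\leqt)$ every strict comparison has $\vN$ as its lower end or $\vY$ as its upper end.

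Your route is more informative. It shows that A1 is not an independent filter at all, so ``admissible'' coincides with ``classical extension''. It explains, rather than merely observes, why the two free values can be chosen independently. It does not depend on the identification step, and as a single Lean lemma quantified over all $\delta$ it would subsume the four instance theorems. The paper's route is what is actually mechanized, but it leaves that redundancy implicit.

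On the labels, you are right not to match by list position, but say plainly why. Read literally, the statement's ``respectively'' calls $\dopt$ contradiction-averse and ignorance-tolerant, and calls $\dcont$ maximally inclusive. That contradicts Table~\ref{tab:delta}, where $\dopt$ gives $(\vY,\vY)$ and $\dcont$ gives $(\vN,\vY)$. So the last two labels are swapped in the statement, and your value-based pairing is the correct reading.
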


\begin{proposition}[Operator monotonicity]\label{prop:monotone}
$\Sred^{\dcons} \subseteq \Sred^{\dprec} \subseteq \Sred^{\dopt}$ and
$\Sred^{\dcons} \subseteq \Sred^{\dcont} \subseteq \Sred^{\dopt}$, with
$\Sred^{\dprec}$ and $\Sred^{\dcont}$ incomparable in general. This partial
order underwrites a systematic sensitivity-analysis protocol: compute
equilibria under each operator and classify them as robust (stable under all)
or fragile (operator-dependent).
\end{proposition}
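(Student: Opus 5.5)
The plan is to reduce every inclusion to a pointwise comparison of the operators on $\FOUR$. By Definition~\ref{def:delta}, $s\in\Sred^{\delta}$ iff $\delta(\phi_s)=\vY$. So if $\delta(v)=\vY$ implies $\delta'(v)=\vY$ for every $v\in\FOUR$, then $\Sred^{\delta}\subseteq\Sred^{\delta'}$, whatever composition operator $\Gamma$ produced $\phi_s$. Write $\delta\preceq\delta'$ for this pointwise relation.

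By A0 all four operators agree on $\vY$ and $\vN$, so $\delta\preceq\delta'$ only needs checking on $\vYN$ and $\vU$. From Table~\ref{tab:delta}:
\begin{itemize}
\item $\dcons$ sends both $\vYN$ and $\vU$ to $\vN$, so $\dcons\preceq\delta$ for every $\delta$.
\item $\dopt$ sends both to $\vY$, so $\delta\preceq\dopt$ for every $\delta$.
\end{itemize}
This gives all four inclusions in both chains. Equivalently, the four operators correspond to the four subsets of $\{\vYN,\vU\}$ sent to $\vY$, ordered by inclusion. The chains are the two maximal chains of this Boolean lattice $2^{\{\vYN,\vU\}}$.

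For incomparability, $\dprec$ and $\dcont$ correspond to the incomparable subsets $\{\vYN\}$ and $\{\vU\}$. The claim says ``in general'', so one witness model suffices. I would take a single DM with two options.
\begin{itemize}
\item Let $s_1$ have both options $\vYN$, or one $\vY$ and one $\vYN$. By Proposition~\ref{prop:propagation}(ii), $\phi_{s_1}=\vYN$ under $\Gamma_\wedge$, so $s_1\in\Sred^{\dprec}\setminus\Sred^{\dcont}$.
\item Let $s_2$ have options in $\{\vY,\vU\}$ with at least one $\vU$. By case (iii), $\phi_{s_2}=\vU$, so $s_2\in\Sred^{\dcont}\setminus\Sred^{\dprec}$.
\end{itemize}
Since $\Sopen=\FOUR^{|O|}$ contains every assignment, these states exist as soon as $|O|\ge1$.

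The only real subtlety is reading ``in general'' correctly. For particular models, for example one with no ambiguous state values, the two reduced spaces may coincide or become nested. So the incomparability claim must be understood as existential over models, and I would state explicitly that the witnesses need states realizing both $\vYN$ and $\vU$ as feasibility values. Everything else is a finite case check that mirrors the Lean proof of truth monotonicity.

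The final sentence of the proposition, about the sensitivity-analysis protocol, is interpretive rather than mathematical, so I would only note that the robust/fragile classification is well defined. Robustness under all four operators follows from membership in the intersection of the four reduced spaces, and by the chains that intersection is $\Sred^{\dcons}$. It should be noted, though, that equilibria are not monotone in the state space in general, so the proposition only concerns the reduced state sets.
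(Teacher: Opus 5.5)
Your argument is correct and supplies the proof the paper omits. The paper states the proposition without proof, relying implicitly on the pointwise comparison of the operators in Table~\ref{tab:delta} on the two free values $\vYN,\vU$; you make this explicit, including the Boolean-lattice picture, and your $\Gamma_\wedge$ witnesses for incomparability exist in every model with $|O|\ge 1$, because $\Sopen=\FOUR^{|O|}$ contains every assignment.

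One fix in your last paragraph: you have the robustness direction reversed. A cross-operator robust state must lie in $\bigcap_\delta \Sred^{\delta}=\Sred^{\dcons}$, but membership there does not make a state robust, as your own remark about the non-monotonicity of equilibria already indicates.
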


\subsection{Catastrophe exclusion and the safety invariant}

\begin{definition}[Forbidden states, boundary, quasi-closed space]\label{def:kappa}
Let $F \subset \Sred$ be the \emph{forbidden} (catastrophic) states, outcomes
that must be structurally prevented rather than merely dispreferred. The
\emph{boundary} is
$B = \{ s \in \Sred\setminus F : \exists i\in N,\ \exists f \in F,\
s \to_i f \}$, and the \emph{quasi-closed state space} is
$\Squasi = \Sred \setminus (F \cup B)$.
\end{definition}

\begin{theorem}[Quasi-closed world invariant]\label{thm:invariant}
If admissible strategic paths are restricted to $\Squasi$---all DMs reason
within $\Squasi$ and take no unilateral move into $B$---then no forbidden
state is reachable from any $s \in \Squasi$.
\end{theorem}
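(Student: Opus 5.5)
The plan is to argue by induction on path length, taking the hypothesis literally. An admissible strategic path starting at $s_0 = s \in \Squasi$ is a finite sequence $s_0 \to_{i_1} s_1 \to_{i_2} \cdots \to_{i_k} s_k$ of unilateral moves. Each step is taken by a DM reasoning within $\Squasi$, so each move originates in $\Squasi$ and does not enter $B$. The invariant to prove is that every $s_j$ lies in $\Squasi$. Since $\Squasi = \Sred \setminus (F \cup B)$ is disjoint from $F$ by Definition~\ref{def:kappa}, it follows that no $s_j$ is forbidden.

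The base case $s_0 \in \Squasi$ is the assumption. For the inductive step, suppose $s_{j-1} \in \Squasi$ and consider the move $s_{j-1} \to_{i_j} s_j$. I would first show $s_j \notin F$ using the boundary definition. If $s_j = f \in F$, then $s_{j-1} \in \Sred \setminus F$ admits a unilateral move $s_{j-1} \to_{i_j} f$, so $s_{j-1} \in B$. This contradicts $s_{j-1} \in \Squasi$, which excludes $B$. The key point is that membership in $\Squasi$ already certifies that no one-step move into $F$ exists from there, and the argument uses no property of $\delta$ or of the warrant levels. Second, $s_j \notin B$ holds directly by the admissibility restriction.

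The remaining requirement is $s_j \in \Sred$. I expect this to be the main obstacle, because Definition~\ref{def:gmcr} lets moves range over all states, and nothing in the bare move relation forces the target to be reduced-feasible. I would settle it by reading ``all DMs reason within $\Squasi$'' as restricting the move relation to $\Squasi \times \Squasi$, or at least to targets in $\Sred$. Under that reading the step is immediate.

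Alternatively, the conclusion only needs $s_j \notin F$ for every $j$, so I could weaken the invariant to $s_j \in \Sred \setminus (F \cup B)$ together with the admissibility clause. I would state this reading explicitly in the proof, note that the boundary exclusion alone carries the safety content, and conclude by induction that $F$ is never reached at any finite $k$.
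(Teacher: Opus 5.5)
Your argument is correct and is essentially the paper's. The paper takes the last state before a path first enters $F$ and notes that it lies in $B$, hence outside $\Squasi$, which is the contrapositive form of your inductive step; your explicit requirement that move targets stay in $\Sred$ is the assumption the paper folds into ``no such path exists within $\Squasi$'', and it is also what places that predecessor in $B$ at all. One caution about your closing alternative: $\Sred\setminus(F\cup B)$ is exactly $\Squasi$, so it is not a weakening, and the $\Sred$ clause cannot be dropped, because $B\subseteq\Sred$ means a path that leaves $\Sred$ could step into $F$ from there without ever entering $B$.
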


\begin{proof}
$F\cap\Squasi = B\cap\Squasi = \emptyset$ by construction. Any path from
$\Squasi$ into $F$ must pass through the last state before entering $F$,
which by definition lies in $B$ and hence outside $\Squasi$. No such path
exists within $\Squasi$.
\end{proof}

\begin{remark}[Why structural exclusion]\label{rem:structural}
One may ask whether it suffices for catastrophic states to be reachable but
never \emph{stable}. Stability is a property of resting points: it constrains
where a conflict settles, not which states are traversed en route, and it
offers no protection against out-of-equilibrium behavior or misperception.
For outcomes that are unacceptable regardless of probability---irreversible by
definition---the appropriate guarantee is path-level and structural, which is
what Theorem~\ref{thm:invariant} provides; preference-based deterrence within
$\Squasi$ then operates on top of, not instead of, the structural barrier.
\end{remark}

\begin{corollary}[Defense in depth]\label{cor:depth}
Two independent barriers protect against catastrophe: the \emph{entailment
barrier} (if every transition into $F$ has $\Ent = \vN$, then $F$ is
unreachable at every level $\ell$, regardless of $\kappa$) and the
\emph{structural barrier} of boundary exclusion.
\end{corollary}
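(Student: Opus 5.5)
The corollary makes two claims, and I would prove them separately: first the entailment barrier, then the structural barrier. The independence claim then follows from the observation that the two hypotheses concern different objects.

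\textbf{Entailment barrier.} Fix a level $\ell\in\{\vY,\vYN,\vU\}$, a DM $i$, a state $s$ and a forbidden state $f\in F$ with $s\to_i f$. By hypothesis $\Ent(s\to_i f)=\vN$. By Definition~\ref{def:levels}, $D_{\vY}$, $D_{\vYN}$ and $D_{\vU}$ are all subsets of $\{\vY,\vYN,\vU\}$, so $\vN\notin D_\ell$ and therefore $s\not\to^{\ell}_i f$. Because of the nesting in Proposition~\ref{prop:hierarchy}, it suffices to check this at the weakest level $\vU$.

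To pass from single moves to reachability, I take any path at level $\ell$ ending in $F$ and look at its first entry into $F$. That entry is a single level-$\ell$ move into $F$, which we have just excluded. For states outside $F$, I read ``$F$ unreachable'' as reachability from non-forbidden start states. Equivalently, Theorem~\ref{thm:decay} gives $\Ent(\pi)\le_t\Ent(\text{last step})=\vN$, so the path warrant is $\vN$, which lies in no $D_\ell$.

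None of this uses $\kappa$, $B$, or $\Squasi$, which justifies the phrase ``regardless of $\kappa$''.

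\textbf{Structural barrier.} This is exactly Theorem~\ref{thm:invariant}. When paths are restricted to $\Squasi$, forbidden states are unreachable from $\Squasi$, and that argument uses only the graph-theoretic boundary $B$ of Definition~\ref{def:kappa}, not warrants.

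\textbf{Independence.} I would argue that neither barrier's hypothesis implies the other's. The entailment barrier depends only on the $\psi$-values across moves into $F$. The structural barrier depends only on the move relation $\to_i$ and on $F$. I would exhibit two small examples:
\begin{itemize}
\item a move $\vY\warr\vY$ into $F$ has warrant $\vY$, so the entailment barrier fails while boundary exclusion still holds;
\item conversely, if every move into $F$ is a retraction $\vY\warr\vN$, the entailment barrier holds even with $\kappa$ taken trivial, i.e.\ with no boundary excluded.
\end{itemize}

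The main obstacle is interpretive rather than technical: pinning down what ``independent'' means, and making ``reachable at level $\ell$'' precise for multi-step paths. The paper only defines single-step $s\to^\ell_i s'$. I would fix the multi-step notion as a concatenation of level-$\ell$ moves, and note that the first-entry argument works under either the stepwise or the path-warrant reading.
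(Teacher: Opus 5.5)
Your proof is correct and is the argument the paper leaves implicit, since the corollary is stated without proof: the entailment barrier holds because $\vN\notin D_\ell$ for every level in Definition~\ref{def:levels}, so the first entry of any level-$\ell$ path into $F$ is blocked, and the structural barrier is just Theorem~\ref{thm:invariant}. One small repair to your independence example: an option that stays at $\vY$ is not in $O_i^{\Delta}$ and contributes nothing to $\Ent$, so use a selection $\vN\warr\vY=\vY$ instead of ``$\vY\warr\vY$'' to exhibit a move into $F$ with warrant $\vY$.
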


\begin{corollary}[Recovery of standard GMCR under definite
reachability]\label{cor:conservative}
Suppose $F=\emptyset$, $\phi_s\in\{\vY,\vN\}$ for all states, and every
unilateral move of the underlying GMCR is definitely warranted:
$\Ent(s\to_i s')=\vY$ for all $i$ and all $(s,s')\in A_i$. Then
$\Squasi=\Sred=\{s : \phi_s=\vY\}$, the level sets coincide with the
underlying move sets ($A_i^{\vY}=A_i^{\vYN}=A_i^{\vU}=A_i$), and all
QCW-GMCR equilibrium concepts coincide with their standard GMCR
counterparts.
\end{corollary}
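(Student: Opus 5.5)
The plan is to verify the three claims of Corollary~\ref{cor:conservative} in order: the state spaces coincide, the level sets coincide with the move sets, and then the equilibrium concepts coincide.

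\emph{State spaces.} Since $F=\emptyset$, the boundary $B$ of Definition~\ref{def:kappa} is empty, because its defining condition requires some $f\in F$. Hence $\Squasi=\Sred\setminus(\emptyset\cup\emptyset)=\Sred$. Next, every $\phi_s$ lies in $\{\vY,\vN\}$, and axiom A0 of Definition~\ref{def:delta} gives $\delta(\vY)=\vY$ and $\delta(\vN)=\vN$. So $\delta(\phi_s)=\vY$ holds exactly when $\phi_s=\vY$. This is true for \emph{every} admissible $\delta$, so by Theorem~\ref{thm:uniqueness} the choice among the four canonical operators is irrelevant. Therefore $\Sred=\{s:\phi_s=\vY\}$.

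\emph{Level sets.} By Definition~\ref{def:levels}, $A_i^{\ell}$ consists of the unilateral moves $(s,s')\in A_i$ with $\Ent(s\to_i s')\in D_\ell$. By hypothesis $\Ent=\vY$ on all of $A_i$, and $\vY\in D_{\vY}\subseteq D_{\vYN}\subseteq D_{\vU}$. Hence $A_i\subseteq A_i^{\vY}\subseteq A_i^{\vYN}\subseteq A_i^{\vU}\subseteq A_i$, where the middle inclusions are Proposition~\ref{prop:hierarchy} and the last holds because levels only filter unilateral moves. All four sets are therefore equal.

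\emph{Equilibria.} The QCW-GMCR equilibrium concepts have not yet been defined in the text above. I will assume they are the Nash, GMR, SMR and SEQ conditions of Definition~\ref{def:stability}, evaluated on the graph with state set $\Squasi$, level-$\ell$ move sets restricted to $\Squasi$, and the original preferences. Under that reading the argument runs as follows.
\begin{itemize}
\item Restrict the underlying GMCR to the feasible states $\{s:\phi_s=\vY\}$.
\item The data $(N,S,\{A_i\},\{\succ_i\})$ of that restricted GMCR and of the QCW model at any level are now literally identical.
\item Each stability predicate is a function of this data only, so the predicates agree state by state. This includes the opponents' reachability used for sanctions, which is built from the same $A_j$.
\end{itemize}

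The main obstacle is this last step. It depends on the exact form of the catastrophe-avoiding equilibrium definitions, which come later in the paper. If those definitions add extra clauses, such as exclusion of moves into $B$ or a restriction to $\Squasi$, I will check that each clause becomes vacuous once $F=B=\emptyset$ and $\Squasi=\Sred$. I will also need to confirm that ``standard GMCR'' here means the model on feasible states, so that both sides are defined over the same state set.
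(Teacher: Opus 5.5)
Your argument is correct. It is the straightforward definitional unwinding the corollary relies on; the paper states the corollary without proof. $F=\emptyset$ forces $B=\emptyset$, A0 applied to Boolean $\phi_s$ gives $\Squasi=\Sred=\{s:\phi_s=\vY\}$, $\vY\in D_{\vY}\subseteq D_{\vYN}\subseteq D_{\vU}$ collapses every level set to $A_i$, and identical data give identical stability predicates.

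The step you left conditional goes through with the later definitions. Definition~\ref{def:levelstab}, including the asymmetric $(\ell_1,\ell_2)$ variants, becomes Definition~\ref{def:stability} verbatim once $\to^{\ell}_i=\to_i$. Definition~\ref{def:ca} becomes Definition~\ref{def:stability} on the subgraph induced by $\Squasi=\{s:\phi_s=\vY\}$. So the $\Squasi$-restriction is not vacuous; it is exactly the standard deletion of infeasible states. The plain and catastrophe-avoiding concepts therefore reach the same standard counterpart under the usual GMCR convention that no move in $A_i$ enters an infeasible state.
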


\begin{remark}[The definite-warrant condition is necessary]\label{rem:recovery}
Boolean feasibility alone does not recover standard GMCR. By
Proposition~\ref{thm:fdechar}, a definite move may add positive commitments
but never retract one: an option de-selection $\vY\warr\vN$ carries warrant
$\vN$ (Table~\ref{tab:entail}) and so belongs to \emph{no} level set
$D_{\vY},D_{\vYN},D_{\vU}$. Thus a standard unilateral move that switches
an option off---$s_1=(p{=}\vY)\to_i s_0=(p{=}\vN)$ in a one-option
model---is absent from every graded reachability relation, and stability
verdicts diverge from standard GMCR unless the definite-warrant hypothesis is
imposed. This reflects a distinction between the \emph{physical} act of de-selecting
an option and its \emph{epistemic} warrant: reconciling epistemic-warrant
reachability with physical option de-selection calls for a two-layer move
model (physical option state plus epistemic warrant), which we leave to
future work.
\end{remark}

\section{Entailment-Graded and Catastrophe-Avoiding Equilibria}\label{sec:equil}

\begin{definition}[Level-$\ell$ improvements and stability]\label{def:levelstab}
For $\ell\in\{\vY,\vYN,\vU\}$, the \emph{level-$\ell$ improvement set} is
$R^{+,\ell}_i(s) = \{ s' : s\to^\ell_i s' \text{ and } s'\succ_i s\}$. State
$s$ is \emph{Nash stable at level $\ell$} if $R^{+,\ell}_i(s)=\emptyset$ for
all $i$. GMR, SMR, and SEQ stability at level $\ell$ are defined by
restricting the improvement and sanction moves of
Definition~\ref{def:stability} to level-$\ell$ reachability; an asymmetric
variant $(\ell_1,\ell_2)$ lets the initial move and the sanctions use
different levels.
\end{definition}

\begin{theorem}[Stability hierarchy]\label{thm:stabhier}
$S^{\mathrm{Nash},\vU} \subseteq S^{\mathrm{Nash},\vYN}
 \subseteq S^{\mathrm{Nash},\vY}$.
\end{theorem}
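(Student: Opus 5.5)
The plan is to reduce the statement to the reachability hierarchy of Proposition~\ref{prop:hierarchy}, using the antitonicity of ``no improvement exists'' in the set of admissible moves. By Definition~\ref{def:levelstab}, $s\in S^{\mathrm{Nash},\ell}$ means $R^{+,\ell}_i(s)=\emptyset$ for every $i\in N$. Since stability is a universal condition over the improvement set, enlarging the set of reachable states can only destroy stability. The inclusions therefore run in the direction opposite to Proposition~\ref{prop:hierarchy}.

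\emph{Step 1.} Fix $i$ and $s$. I will show $R^{+,\vY}_i(s)\subseteq R^{+,\vYN}_i(s)\subseteq R^{+,\vU}_i(s)$. Take $s'\in R^{+,\vY}_i(s)$. Then $s\to^{\vY}_i s'$ and $s'\succ_i s$. Proposition~\ref{prop:hierarchy} gives $s\to^{\vYN}_i s'$, since $\Ent(s\to_i s')\in D_{\vY}\subseteq D_{\vYN}$ and the unilateral-move condition is unchanged. Hence $s'\in R^{+,\vYN}_i(s)$. The second inclusion follows the same way from $D_{\vYN}\subseteq D_{\vU}$. The preference conjunct $s'\succ_i s$ does not depend on the level, so no property of $\succ_i$ is needed, consistent with the axiom-free setting of Definition~\ref{def:gmcr}.

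\emph{Step 2.} Suppose $s\in S^{\mathrm{Nash},\vU}$, so $R^{+,\vU}_i(s)=\emptyset$ for all $i$. By Step~1, $R^{+,\vYN}_i(s)\subseteq\emptyset$, hence $s\in S^{\mathrm{Nash},\vYN}$. Applying the same argument one level down gives $S^{\mathrm{Nash},\vYN}\subseteq S^{\mathrm{Nash},\vY}$. In Lean this is a direct contrapositive: a witness $s'$ at the stricter level is converted into a witness at the looser level by the hierarchy lemma.

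I expect no substantive obstacle; the only point requiring care is the one flagged in Remark~\ref{rem:levels}. The levels must be interpreted through the nested warrant sets $D_\ell$, not through $\ge_t$. Under the truth-order reading, $\vYN$ and $\vU$ are incomparable, so the first inclusion would fail. I will therefore state explicitly that Step~1 rests on the set nesting $D_{\vY}\subset D_{\vYN}\subset D_{\vU}$.
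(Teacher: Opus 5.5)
Your proposal is correct and follows the paper's proof: it obtains the nesting $R^{+,\vY}_i(s)\subseteq R^{+,\vYN}_i(s)\subseteq R^{+,\vU}_i(s)$ from Proposition~\ref{prop:hierarchy} and then propagates emptiness downward. Your added care about using the warrant sets $D_\ell$ rather than $\ge_t$ matches Remark~\ref{rem:levels}.
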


\begin{proof}
By Proposition~\ref{prop:hierarchy},
$R^{+,\vY}_i(s) \subseteq R^{+,\vYN}_i(s) \subseteq R^{+,\vU}_i(s)$; emptiness
propagates downward through the inclusions.
\end{proof}

Stability at level $\vU$ is the most demanding and hence the most robust: the
state remains stable even when DMs reason optimistically about capabilities.

\begin{definition}[Catastrophe-avoiding equilibria]\label{def:ca}
$s'$ is \emph{safely reachable} at level $\ell$ if $s\to^\ell_i s'$ and
$s'\in\Squasi$. \emph{CA-Nash stability at level $\ell$} requires the safe
level-$\ell$ improvement set to be empty for all DMs; CA-GMR, CA-SMR, and
CA-SEQ restrict all moves---initial, sanctioning, counter-sanctioning---to
safe level-$\ell$ reachability within $\Squasi$.
\end{definition}

\begin{definition}[Robust equilibria]\label{def:robust}
$s$ is \emph{cross-level robust} if stable at all three levels (equivalently,
at level $\vU$, by Theorem~\ref{thm:stabhier}); \emph{cross-operator robust}
if stable under all four canonical operators; and \emph{fully robust} if
both.
\end{definition}

\begin{proposition}[Robustness and safety]\label{prop:robustsafe}
If $s$ is fully robust and every transition into $F$ has $\Ent = \vN$,
then $s$ remains stable and $F$ remains unreachable under any admissible
deterministic reduction operator (equivalently, under any of the four
canonical operators; Theorem~\ref{thm:uniqueness}) and any entailment level.
\end{proposition}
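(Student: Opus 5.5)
The proof splits into a stability half and a safety half, and both reduce to earlier results. The plan is to first replace ``any admissible deterministic reduction operator'' by a finite check, then use the definitions of full robustness and the entailment barrier.

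\emph{Step 1 (reduce to the four canonical operators).} By Theorem~\ref{thm:uniqueness}, an admissible $\delta$ is determined by $(\delta(\vYN),\delta(\vU))\in\{\vY,\vN\}^2$. So every admissible deterministic reduction operator is literally one of $\dcons,\dprec,\dopt,\dcont$. Hence $\Sred^{\delta}$, and therefore $\Squasi$ (built from $\Sred$ and $F$ by Definition~\ref{def:kappa}), ranges over only four possibilities. This justifies the ``equivalently'' clause of the statement.

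\emph{Step 2 (stability).} Full robustness means $s$ is cross-operator robust: it is stable under each of the four canonical operators, hence under every admissible $\delta$ by Step~1. It is also cross-level robust: it is stable at level $\vU$. Theorem~\ref{thm:stabhier} then gives stability at levels $\vYN$ and $\vY$. For each fixed operator, the stability in question is a Nash-type emptiness condition on level-$\ell$ improvement sets. By Proposition~\ref{prop:hierarchy}, $R^{+,\vY}_i(s)\subseteq R^{+,\vYN}_i(s)\subseteq R^{+,\vU}_i(s)$, so emptiness at $\vU$ implies emptiness at the other levels. The same argument covers the safe/CA variant, since intersecting with $\Squasi$ preserves the inclusions. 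Thus $s$ is stable for every pair (operator, level).

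\emph{Step 3 (safety).} Suppose every transition into $F$ has $\Ent=\vN$. Since $\vN\notin D_{\vU}\supseteq D_{\vYN}\supseteq D_{\vY}$, no single move into $F$ belongs to any level-$\ell$ relation. Therefore any level-$\ell$ path ending in $F$ would need a last step entering $F$, and that step is not a level-$\ell$ move. So $F$ is unreachable at every level, and this holds whatever $\Sred$ the operator produces. This is the entailment barrier of Corollary~\ref{cor:depth}. It does not depend on $\delta$ or $\kappa$, because $\Ent$ is computed from the option assignments alone. Theorem~\ref{thm:decay} gives an equivalent path-level formulation: $\Ent(\pi)\le_t\vN$ forces $\Ent(\pi)=\vN$ for any path $\pi$ through such a transition.

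\emph{Main obstacle.} The delicate point is Step~2's interpretation of ``stable''. For GMR, SMR or SEQ at level $\ell$, the sanction sets also grow with $\ell$, so stability is \emph{not} monotone in the level in general. I would therefore read ``stable'' as in Definition~\ref{def:robust}, i.e.\ as given by Theorem~\ref{thm:stabhier} in the Nash sense, and make that reading explicit in the proof. If a sanction-based concept is intended instead, I would state the result only for the levels and operators at which robustness was assumed.
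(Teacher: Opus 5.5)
Your proposal is correct and follows the route the paper intends. The paper gives no written proof of this proposition; it treats it as an immediate assembly of the definition of full robustness (Definition~\ref{def:robust}), the exhaustiveness of the canonical operators (Theorem~\ref{thm:uniqueness}), and the entailment barrier of Corollary~\ref{cor:depth}, which are exactly your Steps~1--3. One bookkeeping remark: full robustness has to be read jointly, as stability at every level under every canonical operator. Your level-hierarchy argument moves only along the level axis, so it cannot combine a separately quantified ``stable at $\vU$'' with ``stable under each operator''. Under the joint reading the stability half follows from Step~1 alone, and your caveat that passing from level $\vU$ to the other levels works only for Nash-type concepts is well taken.
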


\section{Coalitional Stability under Epistemic Ambiguity}\label{sec:coalition}

Coalition analysis extends GMCR stability to joint maneuvers by groups of DMs
\citep{kilgour2001,inohara2008a,inohara2008b}, and the interrelationships
among non-cooperative and coalitional stability definitions remain an active
topic \citep{rego2023,zhu2025}. The four-valued layer extends to this setting
with essentially no new machinery, because reachability was coalitional from
the start: the opponent reachability used in Sections~\ref{sec:prelim} and
\ref{sec:equil} is the instance $H = N\setminus\{i\}$ of reachability by an
arbitrary coalition $H$.

\begin{definition}[Level-$\ell$ coalitional improvement]\label{def:coalimp}
For a nonempty coalition $H\subseteq N$, state $s'$ is a \emph{level-$\ell$
coalitional improvement} from $s$ for $H$ if $s'$ is reachable from $s$ by a
nonempty sequence of unilateral moves by members of $H$, each carrying move
entailment in $D_\ell$, and $s' \succ_i s$ for every $i\in H$. Write
$CR^{+,\ell}_H(s)$ for the set of such states. (Conventions in the
literature differ over legality constraints and over whether all members or
only the movers must improve; these choices are orthogonal to the four-valued
layer, which grades the constituent moves.)
\end{definition}

\begin{definition}[Level-$\ell$ coalitional Nash stability]\label{def:cnash}
$s$ is \emph{coalitionally Nash stable at level $\ell$} if
$CR^{+,\ell}_H(s)=\emptyset$ for every nonempty $H\subseteq N$. The
\emph{catastrophe-avoiding} variant restricts every constituent move to safe
level-$\ell$ reachability within $\Squasi$.
\end{definition}

\begin{proposition}[Coalitional hierarchy]\label{prop:coalhier}
$CR^{+,\vY}_H(s) \subseteq CR^{+,\vYN}_H(s) \subseteq CR^{+,\vU}_H(s)$
for every $H$; hence the coalitionally Nash-stable sets satisfy
$S^{\mathrm{CNash},\vU} \subseteq S^{\mathrm{CNash},\vYN} \subseteq
S^{\mathrm{CNash},\vY}$.
\end{proposition}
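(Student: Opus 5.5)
The plan is to reduce everything to the single-move hierarchy of Proposition~\ref{prop:hierarchy}, lifted to sequences of moves, and then to take complements exactly as in the proof of Theorem~\ref{thm:stabhier}. The key observation is that Definition~\ref{def:coalimp} has two independent clauses. One is a path condition: a nonempty sequence of unilateral moves by members of $H$, each with move warrant in $D_\ell$. The other is a preference condition: $s'\succ_i s$ for all $i\in H$. The level $\ell$ enters only through the path condition, so only that clause needs to be compared across levels.

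First, fix $H$ and $s$, and take $s'\in CR^{+,\vY}_H(s)$. Choose a witnessing sequence $s=s_0\to_{i_1}s_1\to\cdots\to_{i_k}s_k=s'$ with $k\ge 1$, each $i_j\in H$, and $\Ent(s_{j-1}\to_{i_j}s_j)\in D_{\vY}$. Since $D_{\vY}\subseteq D_{\vYN}\subseteq D_{\vU}$ (equivalently, by Proposition~\ref{prop:hierarchy} applied to each step), each step is also a level-$\vYN$ move and a level-$\vU$ move. The same sequence therefore witnesses level-$\vYN$ reachability, and the preference clause is untouched, so $s'\in CR^{+,\vYN}_H(s)$. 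The identical argument gives $CR^{+,\vYN}_H(s)\subseteq CR^{+,\vU}_H(s)$. Formally, this is an induction on the length $k$ of the sequence. In the Lean style of \texttt{isDefiniteReach} it is a monotonicity lemma for the reflexive–transitive (here, transitive) closure of a relation under relation inclusion, which mathlib supplies for \texttt{Relation.TransGen}.

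Second, for the stable sets: if $s\in S^{\mathrm{CNash},\vU}$, then $CR^{+,\vU}_H(s)=\emptyset$ for every nonempty $H$. By the first step, $CR^{+,\vYN}_H(s)\subseteq CR^{+,\vU}_H(s)$, so $CR^{+,\vYN}_H(s)=\emptyset$ for every such $H$. Hence $s\in S^{\mathrm{CNash},\vYN}$, and likewise at the next level down. This is the same emptiness-propagation step used for Theorem~\ref{thm:stabhier}.

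I do not expect a genuine obstacle here. The only point that needs care is checking that the definition of $CR^{+,\ell}_H$ introduces no level dependence beyond the per-move warrant condition. For instance, the chosen convention must not demand that each intermediate state improve for movers, in a way that references level-specific sets. Even the alternative conventions mentioned in Definition~\ref{def:coalimp} concern only legality and preferences, which are level-independent, so the argument goes through for each of them. The catastrophe-avoiding variant follows the same way, since adding the constraint $s_j\in\Squasi$ to every step is also level-independent.
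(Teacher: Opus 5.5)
Your proposal is correct and follows the paper's proof. The per-move warrant condition relaxes along the nested sets $D_{\vY}\subset D_{\vYN}\subset D_{\vU}$, so any witnessing sequence at a stricter level also witnesses the looser one, and emptiness of the improvement sets then propagates downward as in Theorem~\ref{thm:stabhier}; your induction on sequence length (or \texttt{Relation.TransGen} monotonicity) just spells out the lifting step the paper leaves implicit.
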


\begin{proof}
The warrant condition on each constituent move relaxes along the nested
warrant sets $D_{\vY}\subset D_{\vYN}\subset D_{\vU}$
(Proposition~\ref{prop:hierarchy}); emptiness then propagates as in
Theorem~\ref{thm:stabhier}.
\end{proof}

\begin{remark}[Epistemic fragility of coalitional maneuvers]\label{rem:coalfragile}
Coalitional improvements are multi-step by nature, so path entailment decay
(Theorem~\ref{thm:decay}) applies with particular force: the warrant of a
joint maneuver is bounded by its weakest link. A coalition that appears
powerful under binary analysis may be unable to execute any \emph{warranted}
joint plan at the credible or definite level, and tightening the level
(shrinking $D_\ell$) thins coalitional reachability at least as fast as
individual reachability.
Four-valued analysis thus separates a coalition's nominal power from its
epistemically executable power---a distinction invisible both to binary
coalition analysis and to individual-level four-valued analysis.
\end{remark}

\begin{proposition}[The invariant is coalition-robust]\label{prop:coalinv}
Theorem~\ref{thm:invariant} extends verbatim to coalitional maneuvers: if
all DMs restrict reasoning to $\Squasi$ and avoid unilateral moves into $B$,
then no sequence of unilateral moves by any coalition reaches $F$.
\end{proposition}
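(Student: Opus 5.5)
The plan is to reduce the coalitional statement to the proof of Theorem~\ref{thm:invariant}, and to observe that that proof never uses the identity of the mover. A coalitional maneuver by $H$ is, by Definition~\ref{def:coalimp}, a finite sequence of unilateral moves $s_0 \to_{i_1} s_1 \to_{i_2} \cdots \to_{i_k} s_k$ with each $i_j \in H \subseteq N$. So I will show a stronger claim: no path whatsoever, made of unilateral moves by \emph{arbitrary} DMs of $N$ and obeying the hypothesis, leads from $\Squasi$ into $F$. Coalition membership, the level $\ell$, and the improvement requirement $s'\succ_i s$ only shrink the set of admissible paths, so the coalitional case follows immediately.

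First, I will fix the reading of the hypothesis exactly as in Theorem~\ref{thm:invariant}. The admissible paths stay in $\Squasi$, and no DM takes a unilateral move whose target lies in $B$. Next, suppose for contradiction that some admissible path starts at $s_0\in\Squasi$ and has $s_k\in F$. Let $j$ be the least index with $s_j\in F$; then $j\ge 1$, since $F\cap\Squasi=\emptyset$ by Definition~\ref{def:kappa}. The predecessor $s_{j-1}$ lies in $\Sred\setminus F$, and the step $s_{j-1}\to_{i_j} s_j$ is a unilateral move by $i_j\in N$ into $f=s_j\in F$. By the definition of $B$, with its existential quantifier over \emph{all} $i\in N$, this gives $s_{j-1}\in B$.

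It remains to close the two cases. If $j-1=0$, then $s_0\in B\cap\Squasi=\emptyset$, which is a contradiction. If $j-1\ge1$, then $s_{j-1}$ was entered by a unilateral move into $B$. This is forbidden by the hypothesis, and in any case $s_{j-1}\notin\Squasi$, contradicting the requirement that the path stay in $\Squasi$.

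The main obstacle is not computational but definitional. The argument needs $B$ to be defined by a move of \emph{any} DM, and it needs every path state to lie in $\Sred$, so that the predecessor of the first forbidden state qualifies for $B$. Both conditions hold by Definition~\ref{def:kappa}, since $\Squasi\subseteq\Sred$ and $F\subset\Sred$. I will state explicitly that the proof is uniform in the choice of mover, because this uniformity is exactly what licenses ``verbatim.''
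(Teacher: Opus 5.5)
Your proposal is correct and is essentially the paper's argument. The paper also notes that the proof of Theorem~\ref{thm:invariant} is path-structural: the last state before a path enters $F$ lies in $B$ by definition, and because $B$ quantifies over all $i\in N$ this does not depend on who moves, so coalitional sequences need no separate safety argument. Your first-index-in-$F$ formulation and explicit case split on $j-1$ spell out that same step in more detail.
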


\begin{proof}
The proof of Theorem~\ref{thm:invariant} is path-structural: any sequence of
unilateral moves ending in $F$ passes through the last state before entering
$F$, which lies in $B$ by definition. No appeal is made to who moves, so
coalitional sequences are covered without a separate safety argument.
\end{proof}

The classical (binary) core of this section---coalitional Nash, GMR, SMR, and
SEQ stability with class-based opponent responses, and the hierarchy
$\mathrm{CNash}\subseteq\mathrm{CSMR}\subseteq\mathrm{CGMR}$ and
$\mathrm{CNash}\subseteq\mathrm{CSEQ}\subseteq\mathrm{CGMR}$---is
machine-checked (Section~\ref{sec:lean}); the level-$\ell$ grading defined
above is the formalization frontier.

\section{Four-Valued Hypergames}\label{sec:hyper}

DMs may disagree about feasibility itself. In a \emph{four-valued hypergame},
each DM $i$ holds subjective assessments $\psi^i_{j,o}\in\FOUR$ of every
option of every DM, inducing a subjective feasibility $\phi^i_s$, a subjective
reduced space via $i$'s operator $\delta_i$, and a subjective quasi-closed
world $S^i_{\mathrm{qc}}$. The \emph{shared safe core} is
$S^{\mathrm{core}} = \bigcap_{i\in N} S^i_{\mathrm{qc}}$.

\begin{remark}[$S^{\mathrm{core}}\neq\emptyset$ does not suffice for
existence]\label{rem:hypercex}
The source framework claimed existence of a hypergame CA-equilibrium from
$S^{\mathrm{core}}\neq\emptyset$ alone, for any stability concept
admitting a stable state on every finite graph. The claim fails: per-game
existence does not make the per-DM stable sets intersect across
\emph{different} subjective games. Two states $\{a,b\}$ and two DMs
suffice: in DM 1's subjective game only DM 1 can move
($a\leftrightarrow b$) and $b\succ_1 a$; in DM 2's subjective game only
DM 2 can move and $a\succ_2 b$. Each subjective game has an equilibrium,
but the only state stable for DM 1 in DM 1's game is $b$, while the only
state stable for DM 2 in DM 2's game is $a$---for Nash and, sanctions being
absent, equally for GMR, SMR, and SEQ---although
$S^{\mathrm{core}}=\{a,b\}\neq\emptyset$. The counterexample is finite
and fully explicit, hence an immediate candidate for machine-checking.
\end{remark}

\begin{theorem}[Existence under agreement on the core]\label{thm:hyperexist}
Suppose $S^{\mathrm{core}}\neq\emptyset$ and all DMs' subjective structures
(reachability and preferences) coincide on $S^{\mathrm{core}}$. If the
resulting CA-restricted graph model $\tilde G$ admits an equilibrium under
stability concept $\sigma$, then a hypergame CA-$\sigma$ equilibrium exists
in $S^{\mathrm{core}}$.
\end{theorem}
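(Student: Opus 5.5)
The plan is to make precise what a ``hypergame CA-$\sigma$ equilibrium'' is and then show that the agreement hypothesis turns the hypergame question into a question about the single graph model $\tilde G$. I take a hypergame CA-$\sigma$ equilibrium to be a state $s^\ast\in S^{\mathrm{core}}$ that, for every DM $i$, is CA-$\sigma$ stable for $i$ in $i$'s own subjective game. That game uses $i$'s subjective quasi-closed world $S^i_{\mathrm{qc}}$, $i$'s subjective safe level-$\ell$ reachability, and $i$'s subjective preferences, following Definition~\ref{def:ca}. Remark~\ref{rem:hypercex} shows why this notion can fail without further hypotheses: the per-DM stable sets live in different games. The agreement assumption is exactly what should remove that divergence.

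First I define $\tilde G$ explicitly. Its state set is $S^{\mathrm{core}}$. Its move sets are the common restriction to $S^{\mathrm{core}}$ of the DMs' subjective safe reachability relations, and its preferences are the common restriction of the subjective preferences. Both are well defined by the coincidence hypothesis. By assumption, $\tilde G$ has a $\sigma$-equilibrium $s^\ast$; since $S^{\mathrm{core}}\neq\emptyset$, this $s^\ast$ lies in $S^{\mathrm{core}}$.

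Second, I fix a DM $i$ and show that $s^\ast$ is CA-$\sigma$ stable for $i$ in $i$'s subjective game. I unfold $\sigma$, which is Nash, GMR, SMR, or SEQ from Definition~\ref{def:stability}. Each clause quantifies over three things: improvements from $s^\ast$, sanctions by $N\setminus\{i\}$ from those improvements, and, for SMR, counter-moves by $i$. Each of these is a sequence of moves together with preference comparisons. I then argue that every move and every comparison the clause can refer to stays inside $S^{\mathrm{core}}$. Once that holds, each clause is literally the same statement in $i$'s game and in $\tilde G$, because the reachability relations and preferences agree there. The clause holds in $\tilde G$ since $s^\ast$ is a $\tilde G$-equilibrium, so it holds in $i$'s game, and DM $i$ was arbitrary.

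The main obstacle is the containment step. In $i$'s subjective game, CA-restricted moves are confined to $S^i_{\mathrm{qc}}$, not to $S^{\mathrm{core}}$. A priori, DM $i$ could have a safe improvement from $s^\ast$ to a state in $S^i_{\mathrm{qc}}\setminus S^{\mathrm{core}}$, and that improvement would not be visible in $\tilde G$. I would first read the hypothesis ``subjective structures coincide on $S^{\mathrm{core}}$'' as covering all moves with source in $S^{\mathrm{core}}$, so that the safe reachability out of core states is common to all DMs. Under that reading, the targets lie in every $S^j_{\mathrm{qc}}$ and hence in $S^{\mathrm{core}}$. If that reading proves too strong, the fallback is to make the hypergame equilibrium notion itself core-restricted, with all moves ``within $S^{\mathrm{core}}$'' in the spirit of Theorem~\ref{thm:invariant}. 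In that case the equivalence with $\tilde G$ is immediate, and the theorem reduces to the transfer argument of the second step. I would state explicitly whichever of these two conventions is adopted.
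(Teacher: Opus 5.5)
Your proposal is correct, and your fallback convention is exactly the paper's. The paper's proof defines a hypergame CA-$\sigma$ equilibrium as stability for every DM in the CA-restricted subjective game on $S^{\mathrm{core}}$, and states that moves outside $S^{\mathrm{core}}$ are excluded by construction, so full-game stability is neither claimed nor needed; the transfer from $\tilde G$ is then immediate. (A small slip: $s^\ast\in S^{\mathrm{core}}$ holds because $S^{\mathrm{core}}$ is the state set of $\tilde G$, not because $S^{\mathrm{core}}$ is nonempty.)

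Your primary reading assumes that safe reachability out of core states coincides across DMs. That is strictly stronger than the theorem's hypothesis, which only requires coincidence on $S^{\mathrm{core}}$. Under your reading you would get full subjective-game stability, because core states are closed under moves (with an induction along multi-step sanction paths). Under the paper's hypothesis, however, the obstacle you flagged is genuine, and that is precisely why the paper restricts the equilibrium notion to $S^{\mathrm{core}}$.
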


\begin{proof}
Under agreement, the subjective structures restricted to $S^{\mathrm{core}}$
define a single graph model $\tilde G$---the CA-restricted game, in which
only safe moves within $S^{\mathrm{core}}$ are available. A $\sigma$-equilibrium
$s^*$ of $\tilde G$ is stable for every DM under the CA-restricted subjective
game on $S^{\mathrm{core}}$, which is exactly the stability required of a
hypergame CA-$\sigma$ equilibrium (moves outside $S^{\mathrm{core}}$ are
excluded by construction, so full-game stability is neither claimed nor
needed); and $s^* \in S^{\mathrm{core}}$ is recognized as feasible and safe
by all DMs. So $s^*$ is a hypergame CA-$\sigma$ equilibrium.
\end{proof}

Characterizing the weakest epistemic-divergence conditions under which
existence survives---strictly between full agreement on the core, which
suffices, and bare nonemptiness of the core, which does not---is an open
problem that Remark~\ref{rem:hypercex} makes concrete.

\begin{remark}[Relation to hypergame-GMCR]\label{rem:hyperrel}
Hypergame analysis within GMCR is a developed line: first-level hypergames in
graph form model misperception held by and about DMs \citep{aljefri2018};
general hypergame analysis handles multiple perception levels and
self-misperception through universal sets of options and states
\citep{aljefri2020}; and recent work combines first-level hypergames with
\emph{unknown} preference misperceptions, using algebraic representations for
efficient stability computation \citep{zhao2024}; see \citet{trencsenyi2025}
for a broad survey. In all of these, every perceived game remains binary:
uncertainty lives in \emph{which} game a DM perceives, or in preferences,
while each option within each perceived game is committed to $\vY$ or $\vN$.
The four-valued hypergame moves the epistemic resolution inside the
assessment itself: $\psi^i_{j,o}=\vYN$ records that DM $i$ holds
contradictory evidence about an option---inexpressible in a binary perceived
game---and $\vU$ separates ignorance from rejection without multiplying
perceived games. The model degenerates gracefully: binary and coincident
assessments yield an ordinary graph model, and binary but divergent
assessments yield a first-level hypergame in the classical sense, so the
construction strictly generalizes hypergame-GMCR along the epistemic axis.
Where the universal-set construction of \citet{aljefri2018} and
\citet{aljefri2020} already distinguishes states recognized by all DMs from
individually perceived ones, the shared safe core sharpens that object to the
intersection of subjective \emph{quasi-closed} worlds---states each DM
recognizes as feasible \emph{and} safe under its own reduction
operator---so that $S^{\mathrm{core}}=\emptyset$ acquires a diagnostic
reading (structural epistemic conflict requiring belief alignment before
equilibrium analysis) that is unavailable when perception is the only
epistemic variable.

The construction also relates to \emph{interactive unawareness}
\citep{rego2020}, where a DM may be unaware of options available to
itself or others. That line treats awareness structurally by determining
which options enter a DM's subjective model, including iterated levels
of awareness about others' awareness. QCW-GMCR addresses a distinct
but complementary question for options already represented in that
model: $\psi^i_{j,o}=\vU$ records that DM $i$ has grounds for neither
accepting nor rejecting the status of option $o$ of DM $j$.
Thus structural unawareness concerns \emph{which} options are present
in a subjective game, whereas $\vU$ concerns the epistemic standing of
an option already in view; moreover, $\vYN$ represents contradictory
evidence, which has no direct counterpart in an awareness-only account.
\end{remark}

When $S^{\mathrm{core}}=\emptyset$, no state is simultaneously recognized as
feasible and safe by all DMs; this \emph{diagnoses} a structural epistemic
conflict that must be resolved by belief alignment (joint fact-finding,
operator harmonization, catastrophe-threshold alignment) before equilibrium
analysis can proceed.

\section{Illustrative Example}\label{sec:example}

Two nations $A,B$ each control options \emph{Escalate} and
\emph{De-escalate}. Intelligence yields $\psi_{A,\mathrm{esc}}=\vY$,
$\psi_{A,\mathrm{de}}=\vY$, $\psi_{B,\mathrm{de}}=\vY$, but contradictory
evidence about $B$'s escalation capability: $\psi_{B,\mathrm{esc}}=\vYN$.
Under $\Gamma_\wedge$, every state involving $B$'s escalation inherits
$\phi=\vYN$, while states built only on definite options remain $\vY$
(Proposition~\ref{prop:propagation}).

The transition in which $B$ escalates from mutual de-escalation has
warrant $\vY\warr\vYN = \vYN$: credible but not definite. At the definite
level $B$ has no escalation moves and mutual de-escalation is Nash stable; at
the credible level the threat re-enters the analysis and GMR/SMR/SEQ
sanctions must be examined---an asymmetry invisible to binary analysis, since
$A$ retains full flexibility while $B$ cannot definitively commit either to
escalation or to restraint. Adding a forbidden state $F$ (strategic nuclear
exchange) with a single-step boundary and computing $\Squasi$, then under the
preference and move assumptions of the underlying example, mutual
de-escalation is stable across all four reduction operators and all entailment
levels: a \emph{fully robust} equilibrium in the sense of
Definition~\ref{def:robust}. (As always, the Nash/GMR/SMR/SEQ verdicts depend
on the specified preferences and moves, not on option status alone.) Detailed layer-by-layer tables, the Elmira
reinterpretation, and the interstate-ambiguity analysis are omitted here; see
\citet{kato2022} for the underlying four-valued state analysis.

\section{Machine-Checked Formalization in Lean 4}\label{sec:lean}

The accompanying repository formalizes the framework's core in Lean~4 with
mathlib. The development has two parts: a formalization of classical GMCR
that is reusable independently of the four-valued extension, and the
four-valued layer specific to QCW-GMCR. All results below compile without
\texttt{sorry}; each theorem is machine-checked from first principles. The
formalization has already earned its keep as more than certification: it
falsified one axiom of the framework as originally stated and forced its
correction (Remark~\ref{rem:a4}).

\subsection{Classical GMCR core}

A graph model is a structure over arbitrary types of DMs and states, with
moves and strict preference as raw relations:

\begin{lstlisting}
structure GraphModel (N : Type u) (S : Type v) where
  move : N -> S -> S -> Prop
  pref : N -> S -> S -> Prop
\end{lstlisting}

Unilateral improvement, coalitional reachability (as a transitive closure of
single moves by coalition members, for an \emph{arbitrary} coalition
$H\subseteq N$---the opponent set of a single DM is the instance
$H=N\setminus\{i\}$, so the coalitional layer of
Section~\ref{sec:coalition} is definable with no new infrastructure), the
four stability concepts, and equilibrium are then defined, and the classical
hierarchy is proved:
$\mathrm{Nash}\subseteq\mathrm{SMR}\subseteq\mathrm{GMR}$,
$\mathrm{Nash}\subseteq\mathrm{SEQ}\subseteq\mathrm{GMR}$, improvement
reachability entails reachability ($R^+_H \subseteq R_H$), and Nash
equilibria are GMR equilibria. A variant of coalitional reachability with the
Fang--Hipel--Kilgour legality constraint (no DM moves twice in succession) is
formalized by indexing move sequences with the last mover, and legal
reachability is proved to entail unrestricted reachability.

Building directly on the same coalition-general reachability, the classical
coalition stability concepts of Section~\ref{sec:coalition} are formalized:
coalitional Nash, GMR, SMR, and SEQ stability (for coalitions, for DMs, and
as equilibria), with opponents responding through the class of nonempty
coalitions disjoint from $H$ and with an explicit indifference relation
alongside strict preference. The machine-checked results comprise the
coalitional hierarchy
$\mathrm{CNash}\subseteq\mathrm{CSMR}\subseteq\mathrm{CGMR}$ and
$\mathrm{CNash}\subseteq\mathrm{CSEQ}\subseteq\mathrm{CGMR}$ together
with the class-level inclusion
$R_{\hat H}^{++} \subseteq R_{\hat H}$ (class improvement sequences are
class move sequences). Notably, the development declares the weak-order
axioms for preference and indifference, yet none of the inclusion proofs
uses them: the coalitional hierarchy, like the individual one, is verified
for arbitrary preference and indifference relations.

Two design points deserve emphasis. First, \emph{no axioms are imposed on
preferences}: the hierarchy theorems are verified for arbitrary strict
preference relations, so they require neither completeness nor transitivity of
$\succ_i$---the standard complete, possibly intransitive preferences that
GMCR permits are a special case. Second, all definitions are stated over arbitrary
(possibly infinite) type-level models; finiteness is imposed only where
computation requires it.

\subsection{Four-valued layer}

The truth space is an inductive type with decidable equality, and the
connectives are total functions, so every algebraic fact is decidable and
proved by exhaustive case analysis certified by the kernel:

\Needspace{8\baselineskip}
\begin{lstlisting}
inductive Val4 | N | U | YN | Y
  deriving DecidableEq, Repr

def Val4.conj : Val4 -> Val4 -> Val4   -- Table 1
def Val4.entail : Val4 -> Val4 -> Val4 -- Table 2
def Val4.le_k : Val4 -> Val4 -> Prop   -- knowledge order
\end{lstlisting}

Verified results include: commutativity, associativity, idempotence of
$\wedge$, absorption by $\vN$, identity of $\vY$ (Remark~\ref{rem:algebra});
compositional propagation over arbitrary finite option sets, cases (i) and
(iv) of Proposition~\ref{prop:propagation}, with $\Gamma_\wedge$ realized as
a fold over a \texttt{Fintype} of options; the four canonical reduction
operators as computable functions
$\dcons,\dprec,\dopt,\dcont : \texttt{Val4} \to \texttt{Bool}$ together with
classical extension (A0) and truth monotonicity (A1) for each,
together with the refutation of knowledge monotonicity for $\dcons$,
$\dopt$, and $\dcont$ and its proof for $\dprec$
(Definition~\ref{def:admissible}, Remark~\ref{rem:a4}); the graded reachability predicates
\texttt{isDefiniteReach}, \texttt{isCredibleReach}, \texttt{isPossibleReach}
at the transition-value level with the hierarchy of
Proposition~\ref{prop:hierarchy} (definite $\Rightarrow$ credible
$\Rightarrow$ possible); the quasi-closed space as a computable
\texttt{Finset} difference $\Sred \setminus (F\cup B)$
(Definition~\ref{def:kappa}); and the subjective assessment/preference core
of the hypergame structure, together with a shared-safe-core membership
predicate over supplied subjective state spaces (Section~\ref{sec:hyper}).

Table~\ref{tab:corr} summarizes the correspondence between the paper and the
Lean development.

\begin{table}[htbp]
\centering
\caption{Paper--Lean correspondence. All listed items are machine-checked.}
\label{tab:corr}

\scriptsize
\renewcommand{\arraystretch}{1.12}

\begin{tabularx}{\textwidth}{
@{}
>{\raggedright\arraybackslash}p{0.36\textwidth}
>{\raggedright\arraybackslash}X
@{}
}
\toprule
\textbf{Paper statement} & \textbf{Lean identifier(s)} \\
\midrule

Def.\ 2.1, Def.\ 2.2
&
\path{GraphModel},
\path{UI},
\path{CMove},
\path{NashStable}, \ldots
\\

Nash $\subseteq$ SMR $\subseteq$ GMR
&
\path{NashStable.smrStable},
\path{SMRStable.gmrStable}
\\

Nash $\subseteq$ SEQ $\subseteq$ GMR
&
\path{NashStable.seqStable},
\path{SEQStable.gmrStable}
\\

$R_H^{+} \subseteq R_H$
&
\path{cui_imp_cmove}
\\

Legal $\subseteq$ unrestricted reachability
&
\path{lmove_imp_cmove}
\\

Coalitional stability
(Sec.\ 7, classical core)
&
\path{CNashStableForCoalition},
\path{CGMRStableForCoalition},
\path{CSMRStableForCoalition},
\path{CSEQStableForCoalition}
\\

CNash $\subseteq$ CSMR $\subseteq$ CGMR;
CNash $\subseteq$ CSEQ $\subseteq$ CGMR
&
\path{cnash_implies_csmr},
\path{csmr_implies_cgmr},
\path{cseq_implies_cgmr}
\\

$R_{\hat H}^{++} \subseteq R_{\hat H}$
&
\path{R_C_dpp_subset_R_C}
\\

Table 1, Remark 2.5
&
\path{Val4.conj},
\path{conj_comm},
\path{conj_assoc},
\path{conj_idem},
\path{conj_N_left},
\path{conj_Y_left}
\\

Prop.\ 3.3(i), (iv)
&
\path{propagation_property_i},
\path{propagation_property_iv}
\\

Table 3; classical-extension / reduction properties
&
\path{delta_cons},
\path{delta_prec},
\path{delta_opt},
\path{delta_cont}
\\

Def.\ 5.3 (A1): truth monotonicity
&
\path{cons_truth_monotone},
\path{prec_truth_monotone},
\path{opt_truth_monotone},
\path{cont_truth_monotone}
\\

Remark 5.4
(knowledge-monotonicity results)
&
\path{cons_not_knowledge_monotone},
\path{opt_not_knowledge_monotone},
\path{cont_not_knowledge_monotone},
\path{prec_satisfies_knowledge_monotonicity}
\\

Table 2
(transition-warrant connective)
&
\path{Val4.entail}
\\

Prop.\ 4.4
(value-level reachability hierarchy)
&
\path{definite_imp_credible},
\path{credible_imp_possible}
\\

Def.\ 5.7 ($S_{\mathrm{quasi}}=S_{\mathrm{reduced}}\setminus(F\cup B)$ part)
&
\path{qcwStateSpace}
\\

Hypergame assessment core; shared-core membership
&
\path{QcwHypergame},
\path{is_in_shared_safe_core}
\\

\bottomrule
\end{tabularx}
\end{table}

\subsection{Scope and outlook}

The formalization deliberately targets the load-bearing skeleton: the
algebra that drives propagation, the formal requirements governing the
canonical reduction operators, the reachability hierarchy, and the classical
stability hierarchy that Corollary~\ref{cor:conservative} bridges to. Not yet
formalized are the corrected FDE characterization
(Proposition~\ref{thm:fdechar}) and the hypergame existence
counterexample (Remark~\ref{rem:hypercex})---both proved on paper here and
next in line for the Lean development---the quasi-closed world invariant
(Theorem~\ref{thm:invariant}), path entailment decay
(Theorem~\ref{thm:decay}), the level-$\ell$ stability hierarchy over full
graph models (Theorem~\ref{thm:stabhier}), the level-$\ell$ grading of the
coalitional concepts of Section~\ref{sec:coalition} with their
interrelationship theorems in the style of \citet{inohara2008b} and
\citet{zhu2025}, and the corrected hypergame existence theorem
(Theorem~\ref{thm:hyperexist}); these are the immediate next targets for the
Lean development. Because the four-valued layer is built from decidable,
computable definitions over \texttt{Fintype}s, the same development doubles as
a verified reference implementation: concrete conflicts can be analyzed with
\texttt{\#eval}/\texttt{decide}, so that the sensitivity-analysis protocol of
Proposition~\ref{prop:monotone} runs on code whose correctness properties are
theorems rather than test cases. This matches the framework's motivating
concern: in safety-critical decision support, the analysis machinery itself
should carry formal guarantees.

\section{Concluding Remarks}\label{sec:conclusion}

This note recorded the core of QCW-GMCR---option-level four-valued
foundations, entailment-graded reachability, formally constrained reduction operators,
and catastrophe-avoiding equilibria with the quasi-closed world invariant---
together with the current status of its Lean~4 formalization. The
formalization exposed the reduction-operator monotonicity issue
(Remark~\ref{rem:a4}) and helped align the reachability definitions with the
machine-checked warrant hierarchy (Remark~\ref{rem:levels}). Further
analytical checks led to the corrected FDE characterization, the GMCR-recovery
condition, and the hypergame-existence statement
(Remarks~\ref{rem:fderefuted}, \ref{rem:recovery}, and~\ref{rem:hypercex}). Future work
includes formalizing the remaining theorems listed above, dynamic extensions
with belief revision, empirical validation of reduction operators, and
scaling methods for large option spaces.

\subsection*{Acknowledgments}
This research did not receive any specific grant from funding agencies in the
public, commercial, or not-for-profit sectors.

\end{document}